\newcommand{\etal}{\emph{et al.}\xspace}
\newcommand{\tb}{\ensuremath{\overline{\theta}}}
\newcommand{\canon}[2]{\ensuremath{D_{#1}^{#2}}}
\theoremstyle{plain}
\newtheorem{theorem}{Theorem}
\newtheorem{lemma}[theorem]{Lemma}
\newtheorem{corollary}[theorem]{Corollary}
\author{
  Luis~Barba,%
  \footnotemark[2]
  \thanks{\affil{D\'epartement d'Informatique, Universit\'e Libre de Bruxelles},
          \email{lbarbafl@ulb.ac.be}}\,
  Prosenjit~Bose,%
  \thanks{\affil{School of Computer Science, Carleton University},
          \email{jit@scs.carleton.ca, andre@cg.scs.carleton.ca, sander@cg.scs.carleton.ca}. Research supported in part by NSERC and Carleton University's President's 2010 Doctoral Fellowship.}\,
  Mirela~Damian,%
  \thanks{\affil{Department of Computing Sciences, Villanova University},
          \email{mirela.damian@villanova.edu}. Research supported by NSF grant CCF-1218814.}\,
  Rolf~Fagerberg,%
  \thanks{\affil{Department of Computer Science, University of Southern Denmark},
          \email{rolf@imada.sdu.dk}}\,
  Wah~Loon~Keng,%
  \thanks{\affil{Department of Computer Science, Lafayette College},
          \email{kengw@lafayette.edu, gexia@cs.lafayette.edu}}\,
  Joseph~O'Rourke,%
  \thanks{\affil{Department of Computer Science, Smith College},
          \email{orourke@cs.smith.edu}}\,
  Andr\'e~van~Renssen,%
  \footnotemark[2]\,
  Perouz~Taslakian,%
  \thanks{\affil{School of Science and Engineering, American University of Armenia},
          \email{perouz.taslakian@ulb.ac.be}}\,\,\,
  Sander~Verdonschot,%
  \footnotemark[2]\,
  and Ge~Xia%
  \footnotemark[5]
}
\title{\MakeUppercase{New and Improved Spanning Ratios for Yao Graphs}}
\date{\today}
\begin{document}

\maketitle

\begin{abstract}
For a set of points in the plane and a fixed integer $k > 0$, the Yao graph $Y_k$ partitions the space around each point into $k$ equiangular cones of angle $\theta=2\pi/k$, and connects each point to a nearest neighbor in each cone. It is known for all Yao graphs, with the sole exception of $Y_5$, whether or not they are geometric spanners. In this paper we close this gap by showing that for odd $k \geq 5$, the spanning ratio of $Y_k$ is at most $1/(1-2\sin(3\theta/8))$, which gives the first constant upper bound for $Y_5$, and is an improvement over the previous bound of $1/(1-2\sin(\theta/2))$ for odd $k \geq 7$.

We further reduce the upper bound on the spanning ratio for $Y_5$ from $10.9$ to \mbox{$2+\sqrt{3} \approx 3.74$}, which falls slightly below the lower bound of $3.79$ established for the  spanning ratio of $\Theta_5$ ($\Theta$-graphs differ from Yao graphs only in the way they select the closest neighbor in each cone). This is the first such separation between a Yao and $\Theta$-graph with the same number of cones. We also give a lower bound of $2.87$ on the spanning ratio of $Y_5$.

In addition, we revisit the $Y_6$ graph, which plays a particularly important role as the transition between the graphs ($k > 6$) for which simple inductive proofs are known, and the graphs ($k \le 6$) whose best spanning ratios have been established by complex arguments. Here we reduce the known spanning ratio of $Y_6$ from $17.6$ to $5.8$, getting closer to the spanning ratio of 2 established for $\Theta_6$.

Finally, we present the first lower bounds on the spanning ratio of Yao graphs with more than six cones, and a construction that shows that the Yao-Yao graph (a bounded-degree variant of the Yao graph) with five cones is not a spanner.
\end{abstract}

\section{Introduction}
\label{sec:introduction}


The complete Euclidean graph defined on a point set $S$ in the plane is the graph with vertex set $S$ and edges connecting each pair of points in $S$, where each edge $xy$ has as weight the Euclidean distance $|xy|$ between its endpoints $x$ and $y$. Although this graph is useful in many different contexts, its main disadvantage is that it has a quadratic number of edges. As such, much effort has gone into the development of various methods for constructing graphs that {\em approximate} the complete Euclidean graph. What does it mean to approximate this graph? One standard approach is to construct a spanning subgraph with fewer edges (typically linear) with the additional property that every edge $e$ of the complete Euclidean graph is approximated by a path in the subgraph whose weight is not much more than the weight of $e$. This gives rise to the notion of a {\em t-spanner}. A $t$-spanner of the complete Euclidean graph is a spanning subgraph with the property that, for each pair of vertices $x$ and $y$, the weight of a shortest path in the subgraph between $x$ and $y$ is at most $t \geq 1$ times $|xy|$. The \emph{spanning ratio} is the smallest $t$ for which the subgraph is a $t$-spanner. Spanners find many applications, such as approximating shortest paths or minimum spanning trees. For a comprehensive overview of geometric spanners and their applications, we refer the reader to the book by Narasimhan and Smid \cite{NS06}.

\begin{figure}[ht]
 \centering
 \includegraphics{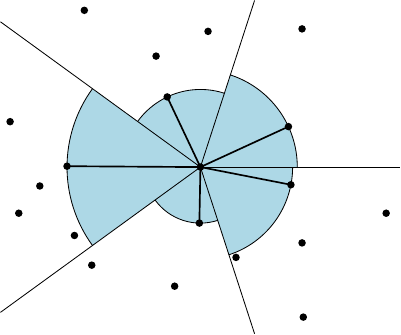}
 \caption{An example of the construction of the Yao graph with 5 cones.}
 \label{fig:YaoConstruction}
\end{figure}

One of the simplest ways of constructing a $t$-spanner is to first partition the plane around each vertex $x$ into a fixed number of cones\footnote{The orientation of the cones is the same for all vertices.} and then add edges connecting $x$ to a closest vertex in each cone (see Figure~\ref{fig:YaoConstruction}). Intuition suggests that this would yield a graph whose spanning ratio depends on the number of cones. Indeed, this is one of the first approximations of the complete Euclidean graph, referred to as \emph{Yao graphs} in the literature, introduced independently by Flinchbaugh and Jones~\cite{flinchbaugh1981strong} and Yao~\cite{yao1982constructing}. We denote the Yao graph by $Y_k$ where $k$ is the number of cones, each having angle $\theta= 2\pi/k$. Yao used these graphs to simplify computation of the Euclidean minimum spanning tree. Flinchbaugh and Jones studied their graph theoretic properties. Neither of them actually proved that they are $t$-spanners.

To the best of our knowledge, the first proof that Yao graphs are spanners was given by Alth{\"o}fer~\etal~\cite{althofer1993sparse}. They showed that for every $t > 1$, there exists a $k$ such that $Y_k$ is a $t$-spanner. It appears that some form of this result was known earlier, as Clarkson~\cite{clarkson1987approximation} already remarked in 1987 that $Y_{12}$ is a $1 + \sqrt{3}$-spanner, albeit without providing a proof or a reference. Bose~\etal~\cite{bose2004approximating} provided a more specific bound on the spanning ratio, by showing that for $k > 8$, $Y_k$ is a geometric spanner with spanning ratio at most $1 / (\cos \theta - \sin \theta)$. This was later strengthened to show that for $k>6$, $Y_k$ is a $1/(1-2\sin(\theta/2))$-spanner~\cite{bose2012piArxiv}. Damian and Raudonis~\cite{damian2012yao} showed that $Y_6$ is a $17.64$-spanner, and Bose~\etal~\cite{bose2012pi} showed that $Y_4$ is a $663$-spanner. For $k < 4$, El Molla~\cite{el2009yao} showed that there is no constant $t$ such that $Y_k$ is a $t$-spanner. This leaves open only the question of whether $Y_5$ is a constant spanner.

In this paper we close this gap by showing that for odd $k \geq 5$, the spanning ratio of $Y_k$ is at most $1/(1-2\sin(3\theta/8))$. This gives the first constant upper bound for $Y_5$ and implies that $Y_k$ is a constant spanner for all $k \geq 4$. For odd $k \geq 7$, our result also improves on the previous bound of $1/(1-2\sin(\theta/2))$. A more careful analysis allows us to reduce the upper bound on the spanning ratio of $Y_5$ from $10.9$ to $2+\sqrt{3} \approx 3.74$. We also give a lower bound of $2.87$ on the spanning ratio of $Y_5$. This complements a recent result on the spanning ratio of $\Theta_5$, which differs from $Y_5$ only in the distance measure it uses to select the closest neighbor in each cone: instead of Euclidean distance, it projects each vertex on the bisector of the cone and selects the vertex with the closest projection. Bose~\etal~\cite{bose2013theta5} showed that $\Theta_5$ has a spanning ratio in the interval $[3.79, 9.96]$. 
Because our upper bound of $3.74$ on the spanning ratio of $Y_5$ is slightly lower than the lower bound of $3.79$ on the spanning ratio of $\Theta_5$, this result establishes the first separation between the spanning ratio of Yao and $\Theta$-graphs. 
For all other $k \geq 4$, it is unclear which of $\Theta_k$ or $Y_k$ has a better spanning ratio.

In addition, we revisit the $Y_6$ graph, which plays a particularly important role as the transition between the graphs ($k > 6$) for which simple inductive proofs are known, and the graphs ($k \le 6$) whose best spanning ratios are established by complex arguments. Here we reduce the known spanning ratio of $Y_6$ from $17.64$ to $5.8$, thus moving toward the spanning ratio of $2$ established for $\Theta_6$ \cite{bonichon2010connections}. In contrast to $Y_5$, we present a lower bound of $2$ on the spanning ratio of $Y_6$, showing that it can never improve upon $\Theta_6$ in this regard.

Finally, we present the first lower bounds on the spanning ratio of Yao graphs with more than six cones, and a construction that shows that the Yao-Yao graph with five cones is not a spanner. The Yao-Yao graph is closely related to the Yao graph; a precise definition can be found in Section~\ref{sec:YaoYao5}.

Before delving into these problems, we introduce a few definitions common to all sections of this paper. In particular, we start with a more precise definition of the Yao graph $Y_k$. For a fixed $k$, let $Q_i(a)$ be the half-open cone of angle $2\pi/k$ with apex $a$, including the angle range $[ i, i+1) \cdot 2\pi/k$, for $i=0,\ldots,k-1$, where angles are measured counterclockwise from the positive $x$-axis. The directed graph $\overrightarrow{Y_k}$ includes exactly one directed edge from $a$ to a closest point in $Q_i(a)$, for each $i=0,\ldots,k-1$. If there are several equally-closest points within $Q_i(a)$, then ties are broken arbitrarily. The graph $Y_k$ is the undirected version of $\overrightarrow{Y_k}$. We use $\canon{a}{b}$ to denote the disk sector with center $a$ and radius $|ab|$ that subtends the cone with apex $a$ containing $b$. For any two points $a, b \in S$, we denote the length of a shortest path in $Y_k$ from $a$ to $b$ by $p(a,b)$.

\section{Spanning ratio of \texorpdfstring{$\boldsymbol{Y_k}$}{Yk}, for odd \texorpdfstring{$\boldsymbol{k}$}{k}}
\label{sec:yao5}

In this section we study the spanning properties of the Yao graphs $Y_k$ defined on a plane point set $S$ by an odd number of cones $k \ge 5$, each of angle $\theta = 2\pi/k$. For $k=5$ in particular, this is the first result showing that $Y_5$ is a constant spanner. For odd values $k > 5$, we improve the currently known bound on the spanning ratio of $Y_k$.

\begin{figure}[ht]
 \centering
 \includegraphics{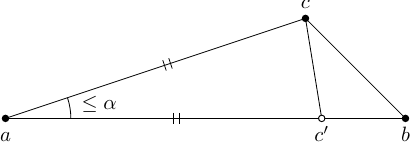}
 \caption{If $\alpha$ is small, there is a close relation between $|ac|$ and $|bc|$.}
 \label{fig:basicYaoLemma}
\end{figure}

\begin{lemma}
 \label{lem:basicyao}
 Given three points $a$, $b$, and $c$, such that $|ac| \leq |ab|$ and $\angle bac \leq \alpha < \pi$, then
 \[ |bc| \leq |ab| - \left( 1 - 2 \sin (\alpha/2) \right) |ac|. \]
\end{lemma}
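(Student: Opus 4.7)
The plan is to introduce an auxiliary point $c'$ on the segment $ab$ at distance $|ac|$ from $a$; the hypothesis $|ac|\le|ab|$ guarantees that such a $c'$ exists on the segment. Since $c'$ lies between $a$ and $b$, we immediately get $|bc'| = |ab|-|ac'| = |ab|-|ac|$. The triangle $\triangle acc'$ is then isosceles with two sides of length $|ac|$ and apex angle $\angle c'ac = \angle bac \le \alpha$, so the base length satisfies $|cc'| = 2|ac|\sin(\angle c'ac/2) \le 2|ac|\sin(\alpha/2)$, where monotonicity of $\sin$ on $[0,\pi/2)$ is legitimate because $\alpha<\pi$.

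Applying the triangle inequality in $\triangle bcc'$ then yields
\[
|bc| \;\le\; |bc'| + |cc'| \;\le\; \bigl(|ab|-|ac|\bigr) + 2\sin(\alpha/2)\,|ac| \;=\; |ab| - \bigl(1-2\sin(\alpha/2)\bigr)|ac|,
\]
which is precisely the claimed inequality. There is no real obstacle here: the argument is a standard ``project $c$ onto the segment through an isosceles comparison point'' trick, and the only things to check are that $c'$ lies on the segment (ensured by $|ac|\le|ab|$) and that $\alpha/2$ stays in the range where $\sin$ is increasing (ensured by $\alpha<\pi$). Note that the bound is only nontrivial when $\alpha < \pi/3$, i.e., when $1-2\sin(\alpha/2)>0$; otherwise the stated inequality is still true but weaker than $|bc|\le|ab|$.
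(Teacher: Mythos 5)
Your proof is correct and follows exactly the paper's own argument: the same auxiliary point $c'$ on $ab$ with $|ac'|=|ac|$, the same isosceles-triangle bound $|cc'|\le 2\sin(\alpha/2)|ac|$, and the same application of the triangle inequality. The added remarks about when the bound is nontrivial are accurate but not needed.
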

\begin{proof}
 Let $c'$ be the point on $ab$ such that $|ac| = |ac'|$ (see~Figure~\ref{fig:basicYaoLemma}). Since $acc'$ forms an isosceles triangle,
  \[ |cc'| = 2 \sin (\angle bac / 2)  |ac| \leq 2 \sin (\alpha / 2) |ac|. \]
 Now, by the triangle inequality, 
 \begin{align*}
  |bc| &\leq |cc'| + |c'b|\\
       &\leq 2 \sin (\alpha / 2)  |ac| + |ab| - |ac'|\\
       &= |ab| - (1 - 2 \sin (\alpha / 2)) |ac|. \qedhere
 \end{align*}
\end{proof}

\begin{theorem}
 \label{thm:yaoodd}
 For any odd integer $k \geq 5$, the graph $Y_k$ has spanning ratio at most $t = 1 / (1 - 2 \sin(3\theta/8))$, where $\theta = 2\pi/k$.
\end{theorem}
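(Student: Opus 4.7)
The plan is to prove the bound by induction on the Euclidean lengths of edges. Order all pairs of points in $S$ by distance; the shortest pair is a mutual nearest neighbor within their common cone, so that edge belongs to $Y_k$ and the base case $p(a,b) \leq |ab| \leq t\,|ab|$ is immediate. For the inductive step, fix a pair $a, b$ and assume $p(x,y) \leq t\,|xy|$ for every strictly shorter pair. Let $c$ be the Yao neighbor of $a$ in the cone $Q_i(a)$ containing $b$, and let $c'$ be the Yao neighbor of $b$ in the cone $Q_j(b)$ containing $a$. If $c = b$ or $c' = a$ then $ab \in Y_k$ and we are done, so assume neither holds.

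The heart of the argument is a geometric claim that uses the odd parity of $k$. Parametrize the direction from $a$ to $b$ by its angular offset $\phi \in [0,\theta)$ inside $Q_i(a)$. Writing $k = 2m+1$ gives $\pi = (m+\tfrac{1}{2})\theta$, so the direction from $b$ to $a$ lies inside $Q_j(b)$ at angular offset $\phi' = (\phi + \theta/2) \bmod \theta$; this is the $\theta/2$ half-cone shift suggested by Fig.~\ref{fig:asymmetric}. Because $c \in Q_i(a)$, one has $\angle bac \leq \max(\phi, \theta - \phi)$, and analogously $\angle abc' \leq \max(\phi', \theta - \phi')$. I will verify by a direct case split on whether $\phi \le \theta/2$ that the two bounds cannot both exceed $3\theta/4$: in the regime $\phi \in [0,\theta/2]$ they equal $\theta - \phi$ and $\phi + \theta/2$, which cross at $\phi = \theta/4$ with common value $3\theta/4$; the other regime is symmetric. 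Hence at least one of $\angle bac$ and $\angle abc'$ is at most $3\theta/4$.

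Assume without loss of generality that $\angle bac \leq 3\theta/4$ (otherwise swap the roles of $a$ and $b$). Since $c$ is a Yao neighbor, $|ac| \le |ab|$, so Lemma~\ref{lem:basicyao} applied with $\alpha = 3\theta/4$ gives
\[ |bc| \;\leq\; |ab| - \bigl(1 - 2\sin(3\theta/8)\bigr)\,|ac|. \]
For every odd $k \ge 5$ one checks $1 - 2\sin(3\theta/8) > 0$, and $c \neq a$ ensures $|ac| > 0$, so $|bc| < |ab|$ strictly. The inductive hypothesis then yields $p(b,c) \le t\,|bc|$, hence
\[ p(a,b) \;\leq\; |ac| + p(b,c) \;\leq\; t\,|ab| + |ac|\bigl(1 - t(1 - 2\sin(3\theta/8))\bigr). \]
Substituting $t = 1/(1 - 2\sin(3\theta/8))$ cancels the $|ac|$ term and closes the induction.

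The main obstacle is the angular step in the middle paragraph: squeezing the bound from the obvious $\angle bac \leq \theta$ (which would only recover the earlier ratio $1/(1-2\sin(\theta/2))$) down to $3\theta/4$, and doing so uniformly for all odd $k \ge 5$. This is precisely where the parity of $k$ enters, via the half-cone offset between the cone at $a$ pointing toward $b$ and the cone at $b$ pointing toward $a$. The remaining ingredients -- the triangle estimate from Lemma~\ref{lem:basicyao}, the algebraic choice of $t$, and the length-rank induction -- are routine once this geometric fact is established, modulo some care with tie-breaking at cone boundaries and with the degenerate case $c = b$ or $c' = a$.
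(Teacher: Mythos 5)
Your proof is correct and follows essentially the same route as the paper: an induction on pairwise distances, with the key observation that the odd parity of $k$ forces a $\theta/2$ offset between the cone at $a$ containing $b$ and the cone at $b$ containing $a$, so that at least one of $\angle bac$, $\angle abc'$ is at most $3\theta/4$, after which Lemma~\ref{lem:basicyao} and the choice of $t$ close the induction exactly as in the paper. Your parametrization by the offset $\phi$ from the cone boundary is just a rephrasing of the paper's $\alpha = \theta/2 - \beta$ bisector argument, and the tie-breaking subtlety you flag in the base case is resolved the same way the paper does it (a tied neighbor $c$ would satisfy $\angle bac \le 3\theta/4 < \pi/3$, forcing $|bc| < |ab|$ and contradicting minimality).
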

\begin{proof} Let $a, b \in S$ be an arbitrary pair of points. We show that there is a path in $Y_k$ from $a$ to $b$ no longer than $t|ab|$. For simplicity, let $Q(a)$ denote the cone with apex $a$ that contains $b$, and let $Q(b)$ denote the cone with apex $b$ that contains $a$. Rotate the point set $S$ such that $Q(a)$ coincides with $Q_0(a)$, as depicted in Figure~\ref{fig:asymmetric}. We assume without loss of generality that $b$ lies below the bisector of $Q(a)$; the case when $b$ lies above this bisector is symmetric.

\begin{figure}[ht]
 \centering
 \includegraphics{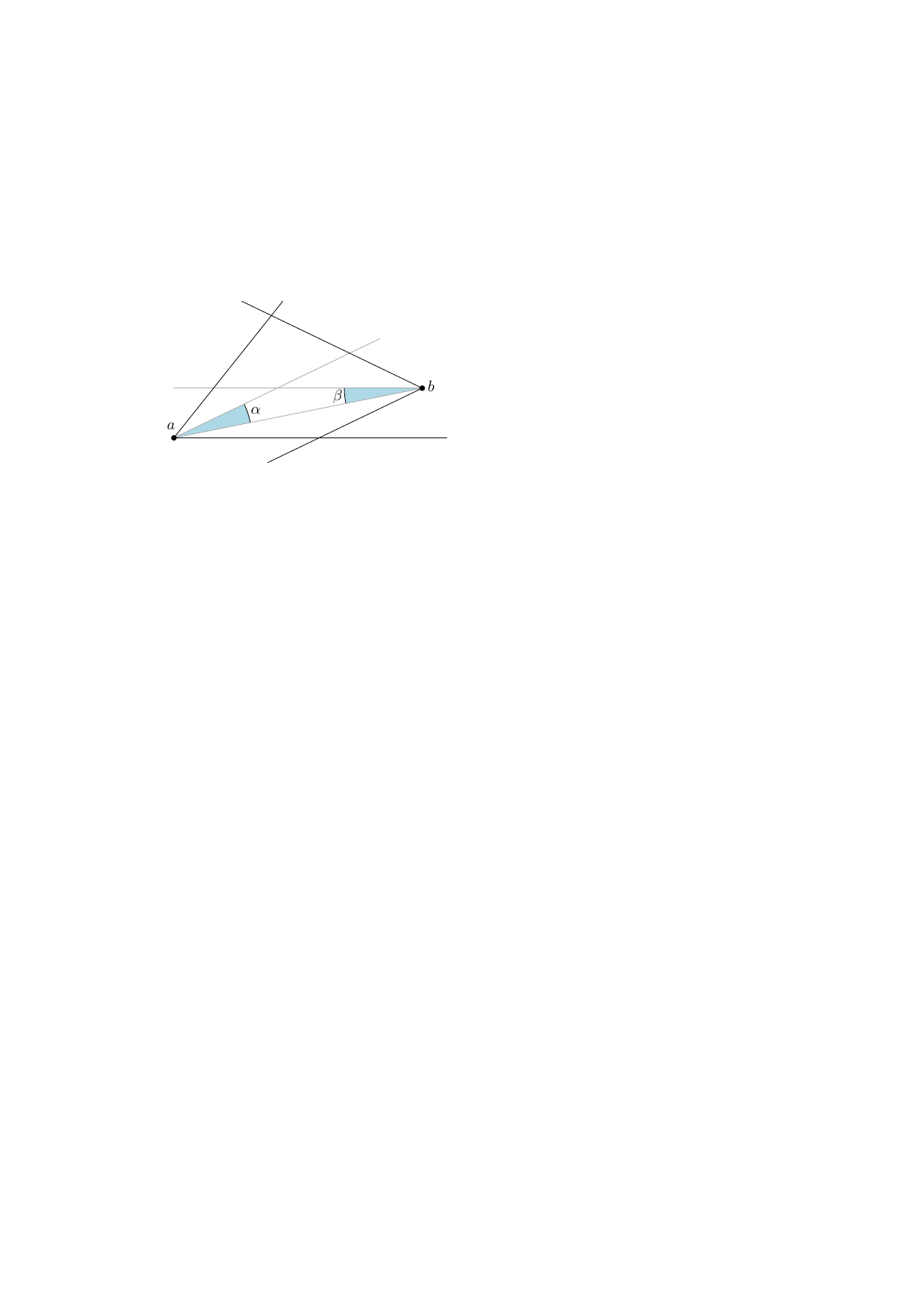}
 \caption{Since opposite cones are not symmetric, either $\alpha$ or $\beta$ is small.}
 \label{fig:asymmetric}
\end{figure}

Let $\alpha$ be the angle formed by the segment $ab$ with the bisector of $Q(a)$, and let $\beta$ be the angle formed by $ab$ with the bisector of $Q(b)$. Since $k$ is odd, the bisector of $Q(a)$ is parallel to the bottom boundary of $Q(b)$. Hence, we have that $\alpha = \theta/2 - \beta$. Assume without loss of generality that $\alpha$ is the smaller of these two angles (if not, we exchange the roles of $a$ and $b$). It follows that $\alpha \leq \theta/4$.

Our proof is by induction on the distance $|ab|$ (more formally, on the \emph{rank} of $\{a,b\}$ among all pairs of points when ordered by distance). 
In the base case $|ab|$ is minimal
, which means that there is no point $c \in Q(a)$ that is strictly closer to $a$ than $b$. Therefore either $ab \in Y_k$, in which case $p(a, b) = |ab|$ and our proof for the base case is finished, or there is a point $c \in Q(a)$ such that $|ab| = |ac|$ and $ac \in Y_k$. In this latter case, since $\alpha \leq \theta/4$ and $k \geq 5$, the angle between $ab$ and $ac$ is at most $\theta/2 + \alpha \leq 3\theta/4 \leq 3/4 \cdot (2\pi/5) = 3\pi/10$. This is less than $\pi/3$, which implies that $|bc| < |ab|$. This contradicts our assumption that $|ab|$ is minimal. It follows that 
$ab \in Y_k$ and the base case holds. 

For the inductive step, let $c \in Q(a)$ be such that $\overrightarrow{ac} \in \overrightarrow{Y_k}$. If $c$ coincides with $b$, then $p(a, b) = |ab|$ and the proof is finished. So assume that $c \neq b$. Because $c$ is the closest vertex to $a$ in this cone, and because $\angle cab \leq \theta/2 + \alpha \leq 3\theta/4$, we can apply Lemma~\ref{lem:basicyao} to derive $|cb| \leq |ab| - (1 - 2 \sin (3\theta/8))  |ac| = |ab| - |ac| / t$, which is strictly less than $|ab|$. Thus we can use the inductive hypothesis on $cb$ to determine a path between $a$ and $b$ of length
\[
  p(a, b) ~~\leq~~ |ac| + t |cb|
  ~~\leq~~ |ac| + t \left(|ab| - \frac{|ac|}{t}\right)
  ~~=~~ t  |ab|. \qedhere
\]
\end{proof}

\noindent
Applying this result to $Y_5$ yields a spanning ratio of $1 / (1 - 2 \sin(3\pi/20)) \approx 10.868$. This is the first known upper bound on the spanning ratio of $Y_5$ and fully settles the question of which Yao graphs are spanners.

\begin{corollary}
 The graph $Y_k$ is a spanner if and only if $k \geq 4$.
\end{corollary}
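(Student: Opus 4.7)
The plan is simply to assemble the corollary from known results (including the theorem just proved), partitioning on the value of $k$ for both directions of the biconditional.

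First I would handle the ``only if'' direction: if $k \in \{1,2,3\}$, then $Y_k$ is not a spanner, which follows immediately from El Molla's result \cite{el2009yao} cited in the introduction. This requires no new work.

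For the ``if'' direction, I would argue case by case that $Y_k$ is a spanner for every $k \geq 4$. For $k=4$, the bound of $663$ due to Bose~\etal~\cite{bose2012pi} suffices. For $k=5$, Theorem~\ref{thm:yaoodd} just established an upper bound of $1/(1-2\sin(3\pi/20)) \approx 10.868$. For $k=6$, Damian and Raudonis~\cite{damian2012yao} showed a bound of $17.64$. For odd $k \geq 7$, Theorem~\ref{thm:yaoodd} again applies directly, yielding spanning ratio at most $1/(1-2\sin(3\theta/8))$, which is finite since $3\theta/8 = 3\pi/(4k) < \pi/6$ for $k \geq 5$, so $2\sin(3\theta/8) < 1$. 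For even $k \geq 8$, the bound $1/(1-2\sin(\theta/2))$ of Bose~\etal~\cite{bose2012piArxiv} applies and is finite since $\theta/2 = \pi/k < \pi/6$.

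The main ``obstacle'' is essentially bookkeeping: making sure every value $k \geq 4$ is covered by some cited or freshly proved finite bound, and that $k \le 3$ is ruled out. Since all the ingredients are already in place, the proof reduces to a short enumeration and a pointer to Theorem~\ref{thm:yaoodd} for the previously open case $k=5$ (and, as a bonus, an improved bound for odd $k \geq 7$). I would keep the write-up to a couple of sentences, since the corollary is intended as a summary statement rather than as a result whose proof requires elaboration.
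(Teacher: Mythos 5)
Your proposal is correct and matches the paper's (implicit) argument exactly: the corollary is stated without proof, following immediately from Theorem~\ref{thm:yaoodd} applied to $Y_5$ together with the previously cited results for $k=4$, $k=6$, $k>6$, and the non-spanner result for $k<4$. Your enumeration is precisely the intended justification.
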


Next we lower the upper bound on the spanning ratio of $Y_5$ by taking a closer look at all feasible configurations.

\begin{theorem}
 The graph $Y_5$ has spanning ratio at most $2 + \sqrt{3} \approx 3.74$.
\end{theorem}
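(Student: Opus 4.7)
The plan is to strengthen the inductive argument of Theorem~\ref{thm:yaoodd} by performing a careful case analysis of all feasible configurations of the nearest-neighbor vertex $c$ within the cone $Q(a)$ that contains $b$. I retain the overall induction structure: induction on the rank of $|ab|$ in the sorted list of pairwise distances, with inductive hypothesis $p(x,y) \le t|xy|$ for all pairs at distance less than $|ab|$, where $t = 2+\sqrt{3}$. The identity $1/t = 2-\sqrt{3}$ gives the sharp threshold needed to close the induction in one step, namely $|cb| \le |ab| - |ac|/t = |ab| - (2-\sqrt{3})|ac|$.

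As before, let $c \in Q(a)$ be the nearest neighbor of $a$ in $Q(a)$, and parametrize its location by the signed angle $\gamma$ from the bisector of $Q(a)$ to $\overrightarrow{ac}$, with $\gamma \in [-\theta/2,\theta/2]$. Let $\alpha \in [0,\theta/4]$ be the signed angle from the bisector to $\overrightarrow{ab}$, as in the proof of Theorem~\ref{thm:yaoodd}. Applying Lemma~\ref{lem:basicyao} with $\angle bac = |\alpha-\gamma|$ yields $|cb|\le |ab| - (1 - 2\sin(|\alpha-\gamma|/2))|ac|$, and the single-step induction closes whenever $2\sin(|\alpha-\gamma|/2) \le \sqrt{3}-1$. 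Since in general $|\alpha-\gamma| \le 3\theta/4 = 54^\circ$, this dispatches many configurations directly, but it fails precisely when $c$ sits near the far boundary of $Q(a)$, on the side of the bisector opposite to $b$.

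For that hard case I would chase the path one additional step: let $d$ be the nearest neighbor of $c$ in the cone $Q(c)$ containing $b$, and apply the inductive hypothesis to obtain $p(d,b) \le t|db|$, so that $p(a,b) \le |ac|+|cd|+t|db|$. Two structural facts drive the rest of the argument: the cones at $c$ are rotated by $\theta = 72^\circ$ relative to those at $a$, so $Q(c)$ is oriented with $\angle bcd$ tame even when $\angle bac$ is large; and the minimality of $|ac|$ within $Q(a)$ forbids $d$ from lying arbitrarily close to $a$, which is the geometric constraint that pins down the extremal configuration. A second application of Lemma~\ref{lem:basicyao}, this time to $c,d,b$, combined with the triangle inequality on $abc$, should then yield the target bound $|ac|+|cd|+t|db| \le t|ab|$.

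The main obstacle will be accurately enumerating the subcases of this two-step analysis and verifying that each closes to exactly $t = 2+\sqrt{3}$. The identity $2+\sqrt{3}=\cot 15^\circ$, together with the pentagonal trigonometric relations associated with $\theta = 72^\circ$ (involving $\sin 18^\circ$, $\sin 36^\circ$, and the golden ratio), strongly suggests that the worst-case configuration has a specific $15^\circ$-related geometry arising from the interplay between the $72^\circ$ cone width and the $18^\circ$ deviation bound on $\alpha$. Identifying this extremal configuration, checking its feasibility (i.e., that $c$ really is the closest vertex to $a$ in $Q(a)$ at that configuration), and verifying that no other configuration yields a larger ratio is where I expect the bulk of the technical difficulty to lie.
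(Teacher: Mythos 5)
Your proposal is an outline rather than a proof, and the route it sketches diverges from the paper's in a way that runs into genuine obstacles. The paper's argument is \emph{two-sided}: it takes the Yao edge $\overrightarrow{ac}$ from $a$ in the cone containing $b$ \emph{and} the Yao edge $\overrightarrow{bd}$ from $b$ in the cone containing $a$, introduces a threshold angle $\tb$, and dispatches by single-step induction whenever $\angle cab \le \tb$ or $\angle dba \le \tb$. In the remaining hard case both $c$ and $d$ are squeezed against the far boundaries of their cones, and the paper bounds $|cd| \le (2\cos\tb-1)|ab|$ and closes with $|ac| + t|cd| + |db| \le t|ab|$, applying induction to the \emph{pair} $(c,d)$ sandwiched between the two first edges. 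The value $2+\sqrt 3$ then falls out of balancing the two constraints $t \ge 1/(1-2\sin(\tb/2))$ and $t \ge 1/(1-\cos\tb)$ — not from any $15^\circ$ or pentagonal identity. Your plan instead chases the path one more hop on the $a$-side only ($a \to c \to d \to \cdots \to b$ with $d$ the nearest neighbor of $c$), and this loses the essential leverage: the odd-$k$ asymmetry guarantees $\min(\alpha,\beta) \le \theta/4$ only for the pair $(a,b)$, so the cone of $c$ containing $b$ enjoys no such guarantee, and $\angle bcd$ can again be as large as $\angle bac$ was. The difficulty simply recurs one level down rather than being resolved.

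Two further concrete gaps. First, your "structural fact" that the cones at $c$ are rotated by $\theta = 72^\circ$ relative to those at $a$ is false — by definition the cone orientation is identical at every vertex (only the index of the cone containing $b$ may change), so the claimed taming of $\angle bcd$ has no basis. Second, your closing inequality $|ac| + |cd| + t|db| \le t|ab|$ requires applying the inductive hypothesis to $(d,b)$, but $|db| < |ab|$ is not established: with $|cd|$ possibly equal to $|cb|$ and $\angle bcd$ exceeding $\pi/3$, one can have $|db| > |cb|$, and since $|cb|$ can be arbitrarily close to $|ab|$ when $|ac|$ is small, $|db| < |ab|$ can fail. To repair the argument you would essentially need to import the paper's machinery: the threshold $\tb$ applied symmetrically at both endpoints, the $|cd|$ bound via the similar isosceles triangles when $ac$ and $bd$ cross, and the auxiliary lemmas (the paper's Lem.~\ref{lem:y5_short} and Lem.~\ref{lem:yao5_fourpoints}) handling the non-crossing case.
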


Here we also use induction on the pairwise distances between pairs of points in $S$. 
Consider the same configuration used in the proof of Theorem~\ref{thm:yaoodd}: $a \in Q(b)$ and $b \in Q(a)$ are points in $S$, and we seek a short path from $a$ and $b$. Without loss of generality, we assume that $Q(a)$ coincides with $Q_0(a)$, and $b$ lies below its bisector. 
If $|ab|$ is minimal (the base case) or $ab \in Y_5$, arguments similar to the ones used in the proof of Theorem~\ref{thm:yaoodd} show that $p(a, b) = |ab|$ and our proof is finished. So let $c \in Q(a)$ and $d \in Q(b)$ be the vertices in $S$ such that $\overrightarrow{ac} \in \overrightarrow{Y_5}$ and $\overrightarrow{bd} \in \overrightarrow{Y_5}$, and let $\phi = \angle cab$, and $\psi = \angle dba$ (see Figure~\ref{fig:yao5}a).

Now, instead of applying Lemma~\ref{lem:basicyao} for the maximum value of $\phi$ (as in the proof of Theorem~\ref{thm:yaoodd}), we apply Lemma~\ref{lem:basicyao} only for values $\phi \leq \tb$ or $\psi \leq \tb$, for some threshold angle $\tb$ (to be determined later).
These cases yield a spanning ratio of $t \geq 1 / (1 - 2 \sin(\tb/2))$. We handle the remaining cases differently, so for the remainder of the proof, we assume that $\phi > \tb$ and $\psi > \tb$.
We compute an exact value of $\tb$ shortly, but for now we only need that $\theta/2 < \tb < 3\theta/4$. This implies that neither $c$ nor $d$ can lie below $ab$, as this would make the corresponding angle smaller than $\theta/2$.

\begin{figure}[htb]
 \centering
 \begin{tabular}{cc}
  \includegraphics{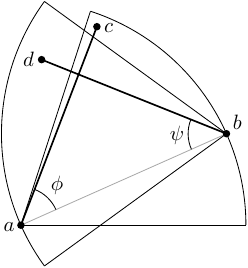} &
  \includegraphics{Yao5b} \\
  (a) & (b)
 \end{tabular} 
 \caption{(a) Two vertices of $Y_5$ with their closest vertices. (b) The worst-case situation when $ac$ and $bd$ cross.}
 \label{fig:yao5}
\end{figure}

First consider the case where $ac$ and $bd$ intersect. In this case, instead of directly applying an inductive argument to either $cb$ or $da$, we bound the distance $cd$ and use induction to show that $|ac| + t |cd| + |db| \leq t |ab|$.
To derive this bound, consider the point $c'$ such that $\angle c'ab = \tb$ and $|ac'| = |ab|$ and the analogously defined point $d'$ (see Figure~\ref{fig:yao5}b).
Let $s$ be the intersection point between $ac'$ and $bd'$.
When $ac$ and $bd$ intersect, the distance $|cd|$ can be increased by rotating $c$ towards $b$ and $d$ towards $a$.
Since both $\phi$ and $\psi$ must be larger than $\tb$, the worst case occurs when $\phi = \psi = \tb$, leaving $c$ and $d$ on the boundary of $\triangle c'd's$.
As $c'd'$ is the longest side of this triangle, it follows that $|cd| < |c'd'|$.
Using the fact that the triangles $\triangle c'd's$ and $\triangle abs$ are similar and isosceles, we can compute $|c'd'|$:

\begin{align*}
 |c'd'| &= 2 |c's| \cos \tb \\
        &= 2 (|ac'| - |as|) \cos \tb \\
        &= 2 \left(|ab| - \frac{|ab|}{2 \cos \tb}\right) \cos \tb \\
        &= (2 \cos \tb - 1) |ab|
\end{align*}

Recall that our aim is to use induction on $cd$ to obtain a short path from $a$ to $b$.
We now compute the spanning ratio $t$ required for the inequality $ |ac| + t  |cd| + |db| \leq t  |ab|$ to hold.
By the inequality above, we have that $|ac| + t  |cd| + |db| \leq |ab| + t  (2 \cos \tb - 1) |ab| + |ab|$.
This latter term is bounded above by $t |ab|$ for any $t \geq 1/(1 - \cos \tb)$.


So far we derived two constraints on $t$ and $\tb$: $t \geq 1 / (1 - 2 \sin(\tb/2))$ and $t \geq 1/(1 - \cos \tb)$.
Because $\sin \tb$ is increasing and $\cos \tb$ is decreasing for all values of $\tb$ under consideration, we minimize $t$ by choosing $\tb$ such that $1 / (1 - 2 \sin(\tb/2)) = 1/(1 - \cos \tb)$.
This yields $\tb = \arccos\big(\sqrt{3}-1\big) \approx 0.75$ and $t = 2 + \sqrt{3} \approx 3.74$.

Now consider what happens when one of $ac$ or $bd$ is ``short'', under some notion of short captured by the following lemma.

\begin{lemma}
 \label{lem:y5_short}
Let $\triangle abc$ be a triangle with angle $\alpha = \angle{cab}$ and longest side $ab$. Let $\lambda > 1$ be a real constant.
Then
 \[ 
  |ac| \leq \frac{2 \lambda^2 \cos \alpha - 2 \lambda}{\lambda^2 - 1}  |ab|~~~\text{ implies that }~~~|ac| + \lambda  |bc| \leq \lambda  |ab|.
 \]
\end{lemma}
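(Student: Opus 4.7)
The plan is to normalize and then reduce everything to a single algebraic identity via the law of cosines. Set $|ab|=1$ (by scaling) and write $x=|ac|$. The conclusion $|ac|+\lambda|bc|\le\lambda|ab|$ becomes $|bc|\le 1 - x/\lambda$. Before squaring both sides, I would verify the right-hand side is nonnegative: since $ab$ is the longest side, $x=|ac|\le|ab|=1$, and $\lambda>1$ gives $x/\lambda < 1$. So squaring is safe.

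Next I would apply the law of cosines, $|bc|^2 = x^2 - 2x\cos\alpha + 1$, so the squared target inequality becomes
\[
x^2 - 2x\cos\alpha + 1 \;\le\; 1 - \frac{2x}{\lambda} + \frac{x^2}{\lambda^2}.
\]
Cancelling the $1$s and regrouping yields
\[
x\left[\left(1-\tfrac{1}{\lambda^2}\right)x \;+\; \tfrac{2}{\lambda} - 2\cos\alpha\right] \;\le\; 0.
\]
Since $x\ge 0$, this is equivalent to $(1-1/\lambda^2)\,x \le 2\cos\alpha - 2/\lambda$, which after multiplying by $\lambda^2/(\lambda^2-1)$ becomes exactly the hypothesis
\[
x \;\le\; \frac{2\lambda^2\cos\alpha - 2\lambda}{\lambda^2 - 1}.
\]
Thus the chain of equivalences runs from the hypothesis back to the squared conclusion, and undoing the square (using the nonnegativity noted above) proves the lemma.

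There is essentially no conceptual obstacle here; the only care needed is (i) justifying that $1 - x/\lambda \ge 0$ so that squaring preserves the inequality, which uses both that $ab$ is the longest side and that $\lambda>1$, and (ii) keeping the algebra clean enough that the hypothesis reappears exactly. I would present the argument as a direct chain of equivalences, ending with the hypothesis as the simplest equivalent form.
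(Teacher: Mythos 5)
Your proof is correct and follows essentially the same route as the paper's: substitute the Law of Cosines expression for $|bc|$ into the target inequality and reduce it algebraically to the hypothesis. You are in fact more careful than the paper, which omits the check that $1 - |ac|/\lambda \ge 0$ needed to justify the squaring step.
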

\begin{proof}
First, note that the first inequality above implies $\lambda > 1/\cos \alpha$, as $|ac|$ would be non-positive otherwise.
By the law of cosines, 
 $|bc| = \sqrt{|ac|^2 + |ab|^2 - 2 |ab||ac| \cos \alpha}$.
By substituting this in the inequality 
 $|ac| + \lambda  |bc| \leq \lambda  |ab|$,
we see that it only holds if
$|ac| \leq \frac{2 \lambda^2 \cos \alpha - 2 \lambda}{\lambda^2 - 1} |ab|$, as stated by the lemma.
\end{proof}

\begin{figure}[htb]
 \centering
 \begin{tabular}{cc}
  \includegraphics{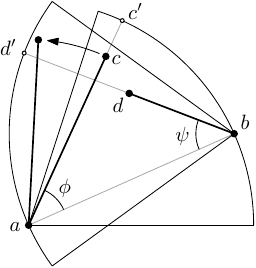} &
  \includegraphics{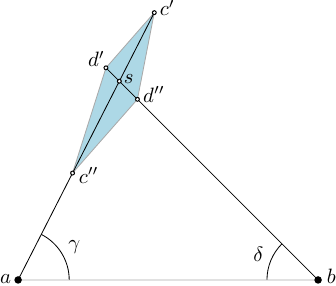} \\
  (a) & (b)
 \end{tabular}
 \caption{(a) The rotation to maximize $|cd|$ when $ac$ and $bd$ do not cross. (b) Illustration of Lemma~\ref{lem:yao5_fourpoints}}
 \label{fig:yao5_dist}
\end{figure}

The only case left to consider is when $ac$ and $bd$ are both long, but they do not intersect.
In this case, we again seek to  bound the distance $|cd|$.
If we can show that $|cd| \leq (2 \cos \tb - 1)  |ab|$, we can apply the same argument as for the intersecting case and we are done.
Let $c'$ be the point on the extension of $ac$ with $|ac'| = |ab|$, and let $d'$ be the analogous point on the extension of $bd$ (see Figure~\ref{fig:yao5_dist}a).
If $ac$ does not intersect $bd'$, we can rotate $d$ away from $c$ by increasing $\psi$.
Similarly, if $bd$ does not intersect $ac'$, we can rotate $c$ away from $d$ by increasing $\phi$.
Thus, the distance $|cd|$ is maximized when $\phi + \psi$ is maximal, which in our context happens when $\phi + \psi = 3\theta/2 = 3\pi/5$.
Note that in most cases, rotating this far moves the corresponding vertex past the boundary of the cone.
But since we are only trying to find an upper bound, this is not a problem.

Now let $c''$ be the point on the line through $ac$ with $|ac''| = \frac{2 t^2 \cos \phi - 2 t}{t^2 - 1} |ab|$, and let $d''$ be the point on the line through $bd$ with $|bd''| = \frac{2 t^2 \cos \psi - 2 t}{t^2 - 1} |ab|$.
If $c$ lies on $ac''$, Lemma~\ref{lem:y5_short} tells us that $|ac| + t |bc| \leq t |ab|$, which is exactly what we need.
The only difficulty is that the location of $c$ changed during the rotation.
But since the rotation preserved $|ac|$ and only increased $|bc|$, the inequality must hold for the configuration before the rotation as well.
The same argument applies for the case when $d$ lies on $bd''$.
The situation where $c$ and $d$ lie on $c''c'$ and $d''d'$, respectively, is handled by the following lemma.

\begin{lemma}\label{lem:yao5_fourpoints}
 Let $a, b, c, d \in S$.
 Let $\gamma = \angle cab$ and $\delta = \angle dba$ such that $\gamma > \tb$, $\delta > \tb$, and $\gamma + \delta = 3\pi/5$.
 Let $t = 2 + \sqrt{3}$. 
 If
\[
  \frac{2t^2 \cos \gamma - 2t}{t^2 - 1} |ab| \leq |ac| \leq |ab|
 ~\text{ and }~
  \frac{2t^2 \cos \delta - 2t}{t^2 - 1}  |ab| \leq |bd| \leq |ab|, 
\]
 then 
  $|cd| \leq (2 \cos \tb - 1)  |ab|$.
\end{lemma}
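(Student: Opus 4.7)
The plan is to work in normalized coordinates, derive a closed-form expression for $|cd|^2$, and exploit convexity to reduce to finitely many boundary cases. Set $|ab|=1$ and place $a=(0,0)$, $b=(1,0)$. As established in the preceding discussion, neither $c$ nor $d$ lies to the right of $ab$, so put both above the line: $c=r_1(\cos\gamma,\sin\gamma)$ and $d=(1-r_2\cos\delta,\,r_2\sin\delta)$, with $r_1=|ac|$, $r_2=|bd|$, and $\delta=3\pi/5-\gamma$. A direct expansion gives
\[
|cd|^2 \;=\; r_1^2 + r_2^2 + 1 - 2r_1\cos\gamma - 2r_2\cos\delta + 2r_1 r_2 \cos(3\pi/5).
\]

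The key observation is that for fixed $\gamma$, this is strictly convex in $(r_1,r_2)$: the Hessian has positive trace and determinant $4\sin^2(3\pi/5)>0$. Hence the maximum over the compact rectangle $[r^{\min}(\gamma),1]\times[r^{\min}(\delta),1]$, where $r^{\min}(x):=(2t^2\cos x-2t)/(t^2-1)$, is attained at one of its four corners. Combined with the symmetry $(r_1,\gamma)\leftrightarrow(r_2,\delta)$, this reduces the lemma to three one-parameter maximizations over $\gamma\in[\tb,\,3\pi/5-\tb]$: (i) $r_1=r_2=1$; (ii) $r_1=1$, $r_2=r^{\min}(\delta)$; (iii) $r_1=r^{\min}(\gamma)$, $r_2=r^{\min}(\delta)$.

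A helpful preliminary identity is $r^{\min}(\tb)=1$, which follows from the relation $t=1/(1-\cos\tb)$ (equivalently $t\cos\tb=t-1$) already used in the proof of the theorem. This forces the rectangle to degenerate at the endpoints $\gamma\in\{\tb,\,3\pi/5-\tb\}$, so the three sub-cases agree at the boundary of the $\gamma$-interval, which suggests that each sub-case's maximum is attained at those endpoints. For sub-case (i), the sum-to-product identity $\cos\gamma+\cos\delta=2\cos(3\pi/10)\cos((\gamma-\delta)/2)$ makes it immediate that the maximum is at $\gamma\in\{\tb,\,3\pi/5-\tb\}$, where the explicit value can be compared against the target $(2\cos\tb-1)^2=21-12\sqrt{3}$ using $\cos\tb=\sqrt{3}-1$. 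Sub-cases (ii) and (iii) are then handled analogously by differentiating the one-variable expression in $\gamma$ and verifying that the derivative has constant sign on the interval (or that any interior critical point yields a smaller value than the boundary).

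The main obstacle is the algebraic bookkeeping in sub-cases (ii) and (iii): $r^{\min}$ is a linear function of $\cos\gamma$ whose coefficients involve $t^2-1=6+4\sqrt{3}$, and substituting into $|cd|^2$ yields a trigonometric expression in $\gamma$ with many terms. Using the identities $t\cos\tb=t-1$ and $t=2+\sqrt{3}$ consistently throughout the substitutions should keep the verifications tractable, but care will be needed to rule out interior critical points that might dominate the boundary values. I expect the final numerical slack to be moderate, since at the extremal configuration $\gamma=\tb$, $r_2=r^{\min}(3\pi/5-\tb)$, a direct computation gives $|cd|$ strictly below the target $2\cos\tb-1=2\sqrt{3}-3$.
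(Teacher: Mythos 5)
Your reduction is sound and genuinely different from the paper's. You write $|cd|^2$ explicitly in coordinates and observe that, for fixed $\gamma$, it is a convex quadratic in $(r_1,r_2)=(|ac|,|bd|)$ (Hessian $\bigl(\begin{smallmatrix}2 & 2\cos(3\pi/5)\\ 2\cos(3\pi/5) & 2\end{smallmatrix}\bigr)$, positive definite), so the maximum over the admissible rectangle sits at a corner; with the $a\leftrightarrow b$ symmetry this leaves three one-parameter problems in $\gamma$. The paper reaches an equivalent corner reduction by a different route: it introduces the intersection point $s$ of the rays $ac'$ and $bd'$, bounds each corner distance by a sum $|sc^{(\cdot)}|+|sd^{(\cdot)}|$ via the triangle inequality, and then shows by first- and second-derivative estimates that the resulting quantities $x_1+|y_1|$ and $x_1+|y_2|$ are nondecreasing in $\gamma$, so only $\gamma=3\pi/5-\tb$ need be checked. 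Your version is tighter (you keep the exact corner distances rather than the detour through $s$) and arguably cleaner, since the convexity in $(r_1,r_2)$ is immediate from the Hessian rather than buried in monotonicity estimates; the identity $r^{\min}(\tb)=1$ and the endpoint value $|cd|\approx 0.40 < 2\cos\tb-1\approx 0.46$ are both correct.

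What is missing is the actual verification of sub-cases (ii) and (iii); these carry the same computational weight as the derivative bookkeeping in the paper's proof, and your stated plan for them is not quite right as written. In sub-case (ii) ($r_1=1$, $r_2=r^{\min}(\delta)$) the derivative of $|cd|^2$ with respect to $\gamma$ does \emph{not} have constant sign on $[\tb,\,3\pi/5-\tb]$: the function decreases from $\gamma=\tb$, bottoms out near $\gamma=3\pi/10$, and rises again, so you must invoke your fallback clause. The clean way to do so is to check that $\tfrac{\mathrm{d}^2}{\mathrm{d}\gamma^2}|cd|^2>0$ on the interval (this reduces to a sum of three cosine terms that is manifestly positive for $\gamma,\delta\in[\tb,3\pi/5-\tb]$), so that $|cd|^2$ is convex in $\gamma$ and its maximum is at an endpoint; the same device handles sub-case (iii). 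With those two convexity checks written out, the argument is complete.
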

\begin{proof}
Assume without loss of generality that $\gamma \geq \delta$ and $|ab| = 1$.
Then $3\pi/10 \leq \gamma \leq 3\pi/5-\tb$ and $\tb \leq \delta \leq 3\pi/10$.
Let $c'$ be the point on the extension of $ac$ with $|ac'| = |ab|$, and let $d'$ be the analogous point on the extension of $bd$.
Let $s$ be the intersection of $ac'$ and $bd'$.
Let $c''$ be the point on the line through $ac$ with $|ac''| = \frac{2 t^2 \cos \gamma - 2 t}{t^2 - 1} |ab|$, and let $d''$ be the point on the line through $bd$ with $|bd''| = \frac{2 t^2 \cos \delta - 2 t}{t^2 - 1} |ab|$ (see Figure~\ref{fig:yao5_dist}b).
Let $c_1 = 2t^2/(t^2-1)$ and $c_2=1/\sin(3\pi/5)$.
We derive
\begin{align}
\frac{\mathrm{d} |ac''|}{\mathrm{d} \gamma} &= \frac{\mathrm{d} \left(\frac{2t^2\cos\gamma -2t}{t^2-1}\right)}{\mathrm{d} \gamma} = \frac{-2t^2\sin\gamma}{t^2-1} = -c_1\sin\gamma,\label{firstd}\\
\frac{\mathrm{d} |bd''|}{\mathrm{d} \gamma} &= \frac{\mathrm{d} \left(\frac{2t^2\cos\delta -2t}{t^2-1}\right)}{\mathrm{d} \gamma} = \frac{\mathrm{d} \left(\frac{2t^2\cos(3\pi/5-\gamma) -2t}{t^2-1}\right)}{\mathrm{d} \gamma} =\frac{2t^2\sin(3\pi/5-\gamma)}{t^2-1} = c_1\sin(3\pi/5-\gamma),\\
\frac{\mathrm{d} |as|}{\mathrm{d} \gamma} &= \frac{\mathrm{d} \left(\frac{\sin\delta}{\sin(\gamma+\delta)}\right)}{\mathrm{d} \gamma} = \frac{\mathrm{d} \left(\frac{\sin(3\pi/5-\gamma)}{\sin(3\pi/5)}\right)}{\mathrm{d} \gamma} =  \frac{-\cos(3\pi/5-\gamma)}{\sin(3\pi/5)} = -c_2\cos(3\pi/5-\gamma),\\
\frac{\mathrm{d} |bs|}{\mathrm{d} \gamma} &= \frac{\mathrm{d} \left(\frac{\sin\gamma}{\sin(\gamma+\delta)}\right)}{\mathrm{d} \gamma} = \frac{\mathrm{d} \left(\frac{\sin\gamma}{\sin(3\pi/5)}\right)}{\mathrm{d} \gamma} = \frac{\cos\gamma}{\sin(3\pi/5)} = c_2\cos\gamma.\label{lastd}
\end{align}
Let
\begin{align}
x_1 &= |as| - |ac''| = \frac{\sin\delta}{\sin(\gamma+\delta)} - \frac{2t^2\cos\gamma -2t}{t^2-1}, \label{x1}\\
x_2 &= |ac'| - |as| = 1- \frac{\sin\delta}{\sin(\gamma+\delta)},\label{x2}\\
y_1 &= |bs| - |bd''| = \frac{\sin\gamma}{\sin(\gamma+\delta)} - \frac{2t^2\cos\delta -2t}{t^2-1},\label{y1}\\
y_2 &= |bd'| - |bs| = 1 - \frac{\sin\gamma}{\sin(\gamma+\delta)}\label{y2}.
\end{align}
Note that the values of $x_1$ and $y_1$ could be negative if $c''$ or $d''$ lie past $s$. Substituting $c_1$, $c_2$, and (\ref{firstd}) - (\ref{lastd}) in the equalities above yields
\begin{align}
\frac{\mathrm{d} x_1}{\mathrm{d} \gamma} &= \frac{\mathrm{d}(|as| - |ac''|)}{\mathrm{d} \gamma} = -c_2\cos(3\pi/5-\gamma) + c_1\sin\gamma,\label{d1}\\
\frac{\mathrm{d} x_2}{\mathrm{d} \gamma} &= \frac{\mathrm{d}(|ac'| - |as|)}{\mathrm{d} \gamma} = \frac{\mathrm{d}(1 - |as|)}{\mathrm{d} \gamma} = c_2\cos(3\pi/5-\gamma),\\
\frac{\mathrm{d} y_1}{\mathrm{d} \gamma} &= \frac{\mathrm{d}(|bs| - |bd''|)}{\mathrm{d} \gamma} = c_2\cos\gamma - c_1\sin(3\pi/5-\gamma),\\
\frac{\mathrm{d} y_2}{\mathrm{d} \gamma} &= \frac{\mathrm{d}(|bd'| - |bs|)}{\mathrm{d} \gamma} = \frac{\mathrm{d}(1 - |bs|)}{\mathrm{d} \gamma} = -c_2\cos\gamma.\label{d4}
\end{align}
Recall that 
$c_1 = 2t^2/(t^2-1)$, $c_2=1/\sin(3\pi/5)$, 
and $3\pi/10 \leq \gamma \leq 3\pi/5-\tb$.
We verify the following:
\begin{align*}
\frac{\mathrm{d}^2 x_1}{\mathrm{d} \gamma^2} &= -c_2\sin(3\pi/5-\gamma)+c_1\cos\gamma \\
                                                                            & > - 1.1 \sin(3\pi/10)+2.1 \cos(3\pi/5-\tb)  > 0,\\
\frac{\mathrm{d}^2 x_2}{\mathrm{d} \gamma^2} &= c_2\sin(3\pi/5-\gamma) > 0,\\
\frac{\mathrm{d}^2 y_1}{\mathrm{d} \gamma^2} &= -c_2\sin\gamma+c_1\cos(3\pi/5-\gamma) \\
                                                                            & > - 1.1 \sin(3\pi/5-\tb)+2.1 \cos(3\pi/10)  > 0,\\
\frac{\mathrm{d}^2 y_2}{\mathrm{d} \gamma^2} &= c_2\sin\gamma > 0.
\end{align*}
Therefore, by substituting $\gamma = 3\pi/10$ or $\gamma = 3\pi/5-\tb$ as the lower- or upper-bound of $\gamma$ into (\ref{d1}) - (\ref{d4}), we can verify the following ranges:
\begin{align}
-c_2\cos(3\pi/10) + c_1\sin(3\pi/10) \leq & \frac{\mathrm{d} x_1}{\mathrm{d} \gamma} \leq -c_2\cos\tb + c_1\sin(3\pi/5-\tb), \nonumber \\
c_2\cos(3\pi/10) \leq & \frac{\mathrm{d} x_2}{\mathrm{d} \gamma} \leq c_2\cos\tb,\label{dx2}\\
c_2\cos(3\pi/10) - c_1\sin(3\pi/10) \leq & \frac{\mathrm{d} y_1}{\mathrm{d} \gamma} \leq c_2\cos(3\pi/5-\tb)-c_1\sin\tb, \nonumber \\
-c_2\cos(3\pi/10) \leq & \frac{\mathrm{d} y_2}{\mathrm{d} \gamma} \leq -c_2\cos(3\pi/5-\tb). \nonumber
\end{align}
Specifically, we can verify that
\begin{align}
\frac{\mathrm{d} x_1}{\mathrm{d} \gamma} \geq \max\left(\frac{\mathrm{d} x_2}{\mathrm{d} \gamma},\left|\frac{\mathrm{d} y_1}{\mathrm{d} \gamma}\right|,\left|\frac{\mathrm{d} y_2}{\mathrm{d} \gamma}\right|\right),\label{compared}
\end{align}
which implies
 \[ \frac{\mathrm{d} (x_1-x_2)}{\mathrm{d} \gamma} = \frac{\mathrm{d} x_1}{\mathrm{d} \gamma}-\frac{\mathrm{d} x_2}{\mathrm{d} \gamma} > 0. \]
By simply plugging in $\gamma = 3\pi/10$ into (\ref{x1}) and (\ref{x2}), we verify that $(x_1-x_2) > 0$ when $\gamma = 3\pi/10$ and hence $x_1 > x_2$ for all $\gamma \in [3\pi/10, 3\pi/5-\tb]$.
Similarly, we have $x_2 > 0$ when $\gamma = 3\pi/10$, and hence by (\ref{dx2}), $x_2 > 0$ for all $\gamma \in [3\pi/10, 3\pi/5-\tb]$. 
These together yield $x_1 > x_2 > 0$.
By the triangle inequality,
\begin{align*}
|c''d''| &\leq |sc''|+|sd''| = |x_1| + |y_1| = x_1 + |y_1|, \\
|c''d'| &\leq |sc''|+|sd'| = |x_1| + |y_2| = x_1 + |y_2|,\\
|c'd''| &\leq |sc'|+|sd''| = |x_2| + |y_1| \leq x_1 + |y_1|,\\
|c'd'| &\leq |sc'|+|sd'| = |x_2| + |y_2| \leq x_1 + |y_2|.
\end{align*}
By (\ref{compared}),
\begin{align*}
\frac{\mathrm{d} (x_1 + |y_1|)}{\mathrm{d} \gamma} & \geq \frac{\mathrm{d} (x_1)}{\mathrm{d} \gamma} - \left|\frac{\mathrm{d} (y_1)}{\mathrm{d} \gamma}\right| \geq 0,\\
\frac{\mathrm{d} (x_1 + |y_2|)}{\mathrm{d} \gamma} & \geq \frac{\mathrm{d} (x_1)}{\mathrm{d} \gamma} - \left|\frac{\mathrm{d} (y_2)}{\mathrm{d} \gamma}\right| \geq 0.
\end{align*}
By substituting $\gamma = 3\pi/5-\tb$ into (\ref{x1}), (\ref{y1}), and (\ref{y2}), one can easily verify that $x_1 + |y_1| \leq 2\cos\tb-1$ and $x_1 + |y_2| \leq 2\cos\tb-1$ when $\gamma$ is maximized.
Therefore $\max(x_1 + |y_1|, x_1 + |y_2|) \leq 2\cos\tb-1$ for all $\gamma \in [3\pi/10, 3\pi/5-\tb]$, and hence $|cd| \leq \max(|c''d''|,|c''d'|,|c'd''|,|c'd'|) \leq 2\cos\tb-1 $ as required.
\end{proof}

This completes the proof for the upper bound. 
Next, we prove a lower bound on the spanning ratio.

\begin{theorem}
$Y_5$ has spanning ratio at least $2.87$. 
\end{theorem}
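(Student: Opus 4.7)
The plan is to prove the lower bound constructively: I will exhibit an explicit finite point set $S \subset \mathbb{R}^2$ with two designated vertices $a,b$, and show that every path in $Y_5(S)$ from $a$ to $b$ has length at least $2.87 \cdot |ab|$. The design is guided by the bottleneck identified in the upper-bound proof of Thm.~\ref{thm:yaoodd}: the worst case there is realized when the angles $\phi = \angle cab$ and $\psi = \angle dba$ both sit near the threshold $\tb \approx 0.75$, so a tight lower-bound construction should try to realize exactly that regime and let the loss accumulate.

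Concretely, I would normalize $|ab| = 1$, orient the bisector of $Q(a)$ along the positive $y$-axis, and place $b$ near the right boundary of $Q(a)$, as in Fig.~\ref{fig:yao5}. I would then block the direct edge $\overrightarrow{ab}$ by inserting a helper vertex $c \in Q(a)$ with $|ac|$ just below $|ab|$ and $\angle cab$ as close as possible to $3\theta/4$, so that $\overrightarrow{ac} \in \overrightarrow{Y_5}$ instead of $\overrightarrow{ab}$; a symmetric vertex $d$ is introduced inside $Q(b)$. To drive the ratio well above the trivial value $1 + \cos\tb$ obtained from a single hook, I would nest the same obstruction recursively, treating the pairs $(a,c)$ and $(b,d)$ as smaller sub-instances and inserting further helper vertices, so that any $Y_5$-path from $a$ to $b$ must traverse a chain of ever-smaller detours whose lengths form a geometric-like sum.

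Writing that sum in closed form via the Law of Cosines expresses the realized ratio as a function of a few angular parameters and a depth-dependent scale factor; numerical optimization over the admissible parameter region (subject to $\phi + \psi \le 3\theta/2$ and the distance constraints that keep the intended Yao edges in place) yields a limit of at least $2.87$. I would report the optimizing angles together with a concrete finite-depth witness whose explicit coordinates demonstrate the claimed ratio, so the bound is backed by a specific point set rather than only an asymptotic construction.

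The main obstacle will be certifying that no shortcut Yao edges short-circuit the hook chain. At every inserted vertex, one must check in each of the five cones that the nearest point is the intended neighbor, and not some point lying in a shallower or deeper layer of the construction, nor the other "side" vertex (e.g.\ a direct edge from $c$ to $b$ or from deep-level points back to $a$). This is a finite but delicate case analysis, and the margins in several of the distance inequalities are precisely what prevent the lower bound from matching the upper bound $2+\sqrt 3$. Carving out a consistent parameter window in which all these shortcut-exclusion inequalities hold with slack, while still realizing a ratio above $2.87$, is the core technical step of the proof.
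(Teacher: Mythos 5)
Your proposal matches the paper's argument in all essentials: both build an explicit point set guided by the worst-case "lattice" configuration from the upper-bound analysis, recursively attach that obstruction to the sub-pairs to form a fractal-like chain of detours, tune the angular parameters (the paper does this by equalizing the lengths of all shortest $u$--$v$ paths) to push the ratio to $2.87$, and certify the result with concrete coordinates while noting that shortcut edges created by colliding branches are what cap the bound below $2+\sqrt{3}$. The paper's proof is no more detailed than yours on the shortcut-verification step, so this is the same construction and the same proof.
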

\begin{proof}
The inductive proof of the upper bound on the spanning ratio of $Y_5$ suggests a construction for a lower bound.
It is based on recursively attaching the ``lattice cross'' shown in Figure~\ref{fig:yao5}b to pairs of non-adjacent points (e.g., pairs $\{a,d'\}, \{b,c'\}, \{c',d'\}$ in Figure~\ref{fig:yao5}b).
This recursive construction results in a fractal-like shape, starting from the pair $\{a,b\}$ (see Figure~\ref{fig:frac-4a}).
However, the growth of the fractal is limited by collisions of neighboring fractal branches that create shortcuts to the paths, as shown in the circled area of Figure~\ref{fig:frac-4a}.
This construction yields a spanning ratio of 2.66.
 
\begin{figure}[htbp]
 \centering
 \includegraphics{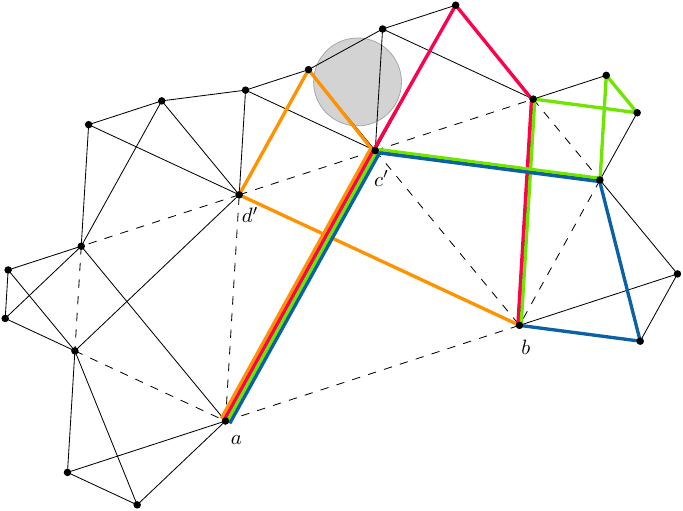}
 \caption{Spanning ratio $2.66$. The fractal growth is limited by collision of branches in the circled area. The shortest paths between $a$ and $b$ are colored.}
 \label{fig:frac-4a}
\end{figure}

\begin{figure}[htbp]
 \centering
 \includegraphics{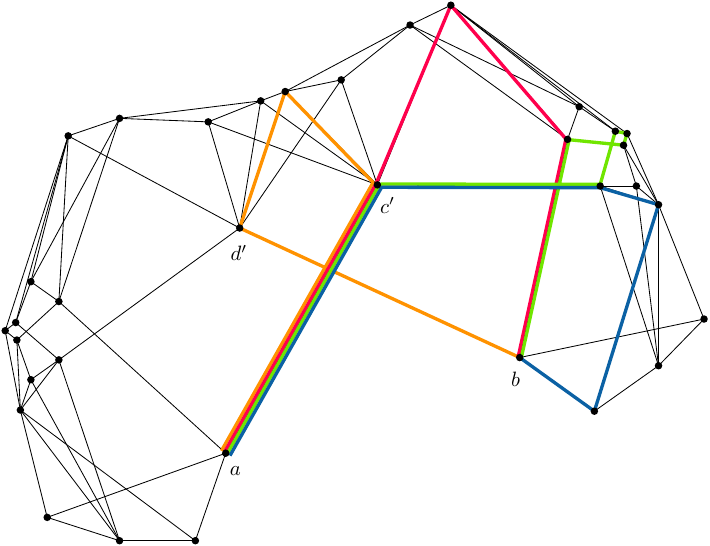}
 \caption{The spanning ratio is increased to 2.87 by adjusting the shape to equalize the lengths of the shortest paths between $a$ and $b$.}
 \label{fig:frac-4b}
\end{figure}



We adjust the shape of the fractal to increase the spanning ratio.
In Figure~\ref{fig:frac-4b}, we obtain a spanning ratio of more than 2.87 by equalizing the length of all shortest paths between \mbox{$a$ and $b$}.
The coordinates for the points in Figure~\ref{fig:frac-4b} can be found in Appendix~\ref{app:LB-coordinates}.
\end{proof}

\section{Spanning ratio of \texorpdfstring{$\boldsymbol{Y_6}$}{Y6}}
\label{sec:yao6}

In this section we fix $k = 6$ and show that, for any pair of points $a, b \in S$,  $p(a, b) \le 5.8 |ab|$.
We also establish a lower bound of $2$ for the spanning ratio of $Y_6$. 
Our proof is inductive and it relies on two simple lemmas, which we introduce next.

Let $a, b \in S$ and let $\overrightarrow{ac} \in \overrightarrow{Y_6}$ be the edge from $a$ within the cone that includes $b$.
The next two lemmas will be relevant in the context where we seek to bound $p(a,b)$ by applying the induction hypothesis to $p(c, b)$. 
The basic geometry is illustrated in Figure~\ref{fig:TriangleNotation}.

\begin{figure}[htbp]
\centering
\includegraphics{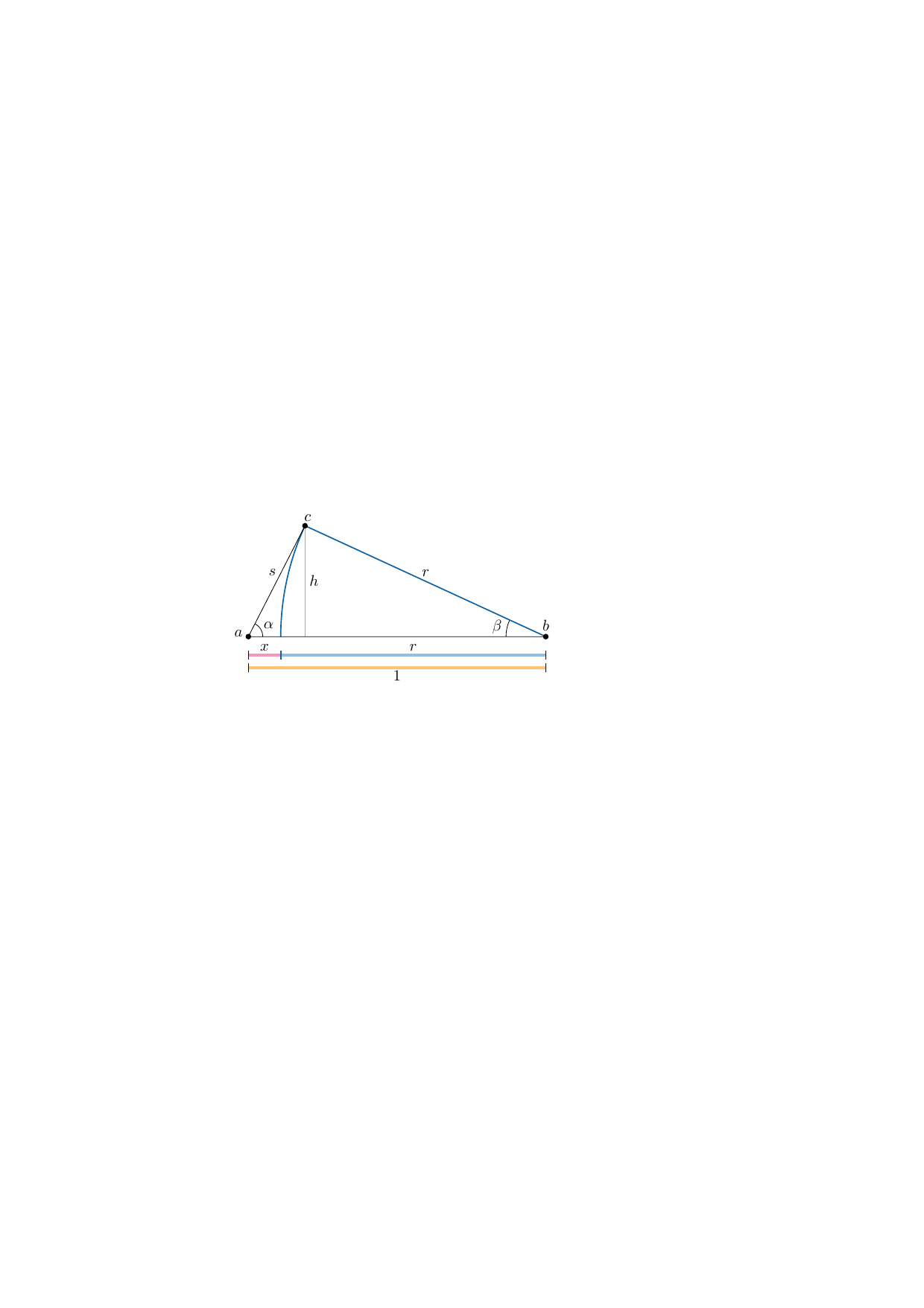}
\caption{Notation for triangle $\triangle abc$. The dimensions have been normalized so that $|ab|=1$.}
\label{fig:TriangleNotation}
\end{figure}

\begin{lemma}[Triangle]
Let $\triangle abc$ be labeled as in Figure~\ref{fig:TriangleNotation},
with $|ac| \le |ab|$, $|bc| < |ab|$, $x=|ab|-|bc|$ and $s=|ac|$.
The ratio $s/x$ is equal to some function $t$ that depends on $\alpha$ and $\beta$:
\begin{equation}
    \frac{s}{x} = t(\alpha, \beta) = \frac{\cos(\beta/2)}{\cos(\alpha+\beta/2)}.
\label{eq:ab}
\end{equation}
\label{lem:AlphaBeta}
\end{lemma}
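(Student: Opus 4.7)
The plan is to apply the Law of Sines to $\triangle abc$ and then simplify the resulting expression for $s/x$ using two standard identities.

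First I would note that the interior angle of $\triangle abc$ at $c$ equals $\pi - \alpha - \beta$, so the Law of Sines gives
\[
    \frac{|ac|}{\sin\beta} \;=\; \frac{|bc|}{\sin\alpha} \;=\; \frac{|ab|}{\sin(\alpha+\beta)}.
\]
Normalizing $|ab|=1$ as in Fig.~\ref{fig:TriangleNotation} yields $s = |ac| = \sin\beta/\sin(\alpha+\beta)$ and $|bc| = \sin\alpha/\sin(\alpha+\beta)$, so that
\[
    x \;=\; 1 - |bc| \;=\; \frac{\sin(\alpha+\beta)-\sin\alpha}{\sin(\alpha+\beta)}.
\]

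The crux is then a short trigonometric manipulation. In the numerator of $s/x$ I would use the double-angle identity $\sin\beta = 2\sin(\beta/2)\cos(\beta/2)$, and in the denominator the sum-to-product identity $\sin(\alpha+\beta)-\sin\alpha = 2\cos(\alpha+\beta/2)\sin(\beta/2)$. The common factor $2\sin(\beta/2)/\sin(\alpha+\beta)$ cancels, leaving
\[
    \frac{s}{x} \;=\; \frac{\cos(\beta/2)}{\cos(\alpha+\beta/2)},
\]
which is exactly the claimed identity $t(\alpha,\beta)$.

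There is no real obstacle; the lemma is a direct consequence of the Law of Sines once the right trigonometric identities are invoked. The only subtlety worth checking is that $\cos(\alpha+\beta/2)$ is strictly positive, so that the ratio is well-defined and matches the geometric sign of $x$. This is immediate from the hypothesis $|bc|<|ab|$, since the latter forces $\sin(\alpha+\beta)>\sin\alpha$ and hence $\alpha+\beta/2<\pi/2$ in any non-degenerate triangle; this also confirms $x>0$, so the formula is valid throughout the regime in which the Triangle Lemma will be applied in the subsequent bound on $p(a,b)$.
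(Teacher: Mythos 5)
Your proof is correct and takes essentially the same route as the paper: the paper derives $s = \sin\beta/\sin(\alpha+\beta)$ and $|bc| = \sin\alpha/\sin(\alpha+\beta)$ from altitude and projection equations rather than by invoking the Law of Sines by name, but these are the same relations, and the final simplification to $\cos(\beta/2)/\cos(\alpha+\beta/2)$ is identical. Your explicit use of the sum-to-product identity and the sign check on $\cos(\alpha+\beta/2)$ are fine additions but do not change the argument.
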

\begin{proof}
Normalize the triangle so that $|ab|=1$;
this does not alter the quantity we seek to compute,  $s/x$.
Let $|bc|=r$ to simplify notation.
Then $x=1-r$ and $x \ge 0$  because $r = |bc| \le |ab| = 1$. 
Note that each of the angles $\angle cab$ and $\angle cba$ is strictly less than 
$\pi/2$, because $|ac| \le |ab|$ and $|bc| \le |ab|$. Thus the projection 
of $c$ onto $ab$ is interior to the segment $ab$. 
Computing the altitude $h$ of $\triangle abc$ in two ways yields
$$
s \sin \alpha  = r \sin \beta.
$$
Also projections onto $ab$ yield
$$
s \cos \alpha  + r \cos  \beta  = 1.
$$
Solving these two equations simultaneously yields expressions for $r$
and $s$ as functions of $\alpha$ and $\beta$:
$$
r = \frac { \sin \alpha } { \sin \alpha \cos \beta + \cos \alpha \sin \beta }, \quad\;
s = \frac { \sin \beta } { \sin \alpha \cos \beta + \cos \alpha \sin \beta }
$$
Now we can compute $s/x = s/(1-r)$ as a function of $\alpha$ and $\beta$.
This simplifies to
$$
\frac{s}{x} = \frac{ \cos ( \beta/2 ) } { \cos ( \alpha + \beta/2 ) } 
$$
as claimed. 
\end{proof}

The following lemma derives an upper bound on the function $t(\alpha, \beta)$ from Lemma~\ref{lem:AlphaBeta}, 
which will be used in Theorem~\ref{thm:Y6spanner} to derive an optimal value for $\delta$.

\begin{lemma}
Let $a, b, c \in S$ satisfy the conditions of Lemma~\ref{lem:AlphaBeta}, 
and let $t(\alpha, \beta)$ be as defined in~(\ref{eq:ab}).  
Let $\delta \in (0, \pi/3)$ be a fixed positive angle. 
If $\alpha \le  \pi/3 - \delta$, or $\beta \le  \pi/3 -\delta$, then
\[ t(\alpha,\beta) \le t(\pi/3, \pi/3-\delta) =   \frac{\cos(\pi/6-\delta/2)}{\sin(\delta/2)}.
\]
\label{lem:AlphaBetaMax}
\end{lemma}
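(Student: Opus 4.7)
My plan is to observe that $t(\alpha,\beta)=\cos(\beta/2)/\cos(\alpha+\beta/2)$ is non-decreasing in each of its two arguments and then exploit this monotonicity. Assuming the implicit cone bounds $\alpha,\beta\le\pi/3$ coming from the ambient $Y_6$ setting (where this lemma will be applied), the maximum of $t$ over the box $[0,\pi/3]^2$ is attained at the corner $(\pi/3,\pi/3)$; forcing one coordinate down to $\pi/3-\delta$ shrinks the box and moves the relevant corner, giving the claimed bound.

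The first step is to compute partial derivatives. A direct computation gives $\partial t/\partial\alpha=\cos(\beta/2)\sin(\alpha+\beta/2)/\cos^2(\alpha+\beta/2)$, which is non-negative on the region $\alpha+\beta/2<\pi/2$ (and this region is guaranteed by the conditions of Lem.~\ref{lem:AlphaBeta}). For $\partial t/\partial\beta$, the quotient rule produces a numerator $\sin(\alpha+\beta/2)\cos(\beta/2)-\cos(\alpha+\beta/2)\sin(\beta/2)$, which the sine-subtraction identity collapses to $\sin\alpha$, so $\partial t/\partial\beta=\sin\alpha/[2\cos^2(\alpha+\beta/2)]\ge 0$. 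Both partials are non-negative throughout the relevant domain.

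I would then split on which hypothesis is active. If $\beta\le\pi/3-\delta$, monotonicity (first in $\alpha$, then in $\beta$) gives $t(\alpha,\beta)\le t(\pi/3,\pi/3-\delta)$ directly, and substituting $\alpha=\pi/3$, $\beta=\pi/3-\delta$ into the definition yields the stated closed form $\cos(\pi/6-\delta/2)/\sin(\delta/2)$. If instead $\alpha\le\pi/3-\delta$, monotonicity only reaches $t(\alpha,\beta)\le t(\pi/3-\delta,\pi/3)=\cos(\pi/6)/\sin\delta$, so a separate verification of $t(\pi/3-\delta,\pi/3)\le t(\pi/3,\pi/3-\delta)$ is still required to close the argument.

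The only real obstacle is that last comparison. Clearing denominators using $\sin\delta=2\sin(\delta/2)\cos(\delta/2)$ and applying the product-to-sum identity $2\cos(\pi/6-\delta/2)\cos(\delta/2)=\cos(\pi/6-\delta)+\cos(\pi/6)$ reduces the desired inequality to $0\le\cos(\pi/6-\delta)$, which holds since $\pi/6-\delta\in(-\pi/6,\pi/6)$ for $\delta\in(0,\pi/3)$. Nothing here is deep; the care is in being explicit about the cone bounds $\alpha,\beta\le\pi/3$ that make the monotonicity reduction work, since without them the supremum of $t$ under the Lem.~\ref{lem:AlphaBeta} constraints alone would be unbounded as $\alpha+\beta/2\to\pi/2$.
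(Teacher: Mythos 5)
Your monotonicity computations are correct and match the paper's (both partial derivatives of $t$ are non-negative on the region $\alpha+\beta/2<\pi/2$), and your observation that some constraint beyond the hypotheses of Lem.~\ref{lem:AlphaBeta} is needed to keep $t$ bounded is exactly right. But the constraint you import, $\beta\le\pi/3$, is not available. The bound $\alpha\le\pi/3$ does come from the ambient cone ($b$ and $c$ lie in the same $\pi/3$-cone apexed at $a$), but nothing forces $c$ to lie in a $\pi/3$-cone apexed at $b$ --- indeed, a whole branch of the case analysis in Thm.~\ref{thm:Y6spanner} deals precisely with $c\notin Q_2(b)$, and the lemma is invoked in case~(1) with $\alpha<\pi/3-\delta$ and no angular control on $\beta$ at all. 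Concretely, take $\angle acb=\pi/2$, $\alpha=\pi/8$, $\beta=3\pi/8$: then $|ac|=|ab|\sin\beta\le|ab|$ and $|bc|=|ab|\sin\alpha<|ab|$, so all hypotheses of Lem.~\ref{lem:AlphaBeta} hold and $\alpha\le\pi/3-\delta$, yet $\beta>\pi/3$. Your box $[0,\pi/3]^2$ therefore does not contain all admissible pairs $(\alpha,\beta)$, and the corner $t(\pi/3-\delta,\pi/3)=\cos(\pi/6)/\sin\delta$ is not an upper bound for the case $\alpha\le\pi/3-\delta$.

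The correct extra constraint on $\beta$ comes from the hypothesis $|ac|\le|ab|$ itself: it gives $\beta\le\angle acb$, and since these two angles sum to $\pi-\alpha$, one gets $\beta\le\pi/2-\alpha/2$. This is what the paper uses. For $\alpha\le\pi/3-\delta$ the admissible region is then bounded by the curve $\beta=\pi/2-\alpha/2$, along which $t$ is still increasing in $\alpha$, so the extreme point is $(\pi/3-\delta,\,\pi/3+\delta/2)$ rather than $(\pi/3-\delta,\,\pi/3)$. You must therefore close the argument by verifying
\[
t(\pi/3-\delta,\,\pi/3+\delta/2)=\frac{\cos(\pi/6+\delta/4)}{\sin(3\delta/4)}\;\le\;\frac{\cos(\pi/6-\delta/2)}{\sin(\delta/2)}=t(\pi/3,\,\pi/3-\delta),
\]
which is the comparison the paper performs (and which does hold for $\delta$ in the stated range), in place of your reduction to $0\le\cos(\pi/6-\delta)$. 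Your treatment of the case $\beta\le\pi/3-\delta$ is fine as is, since there only $\alpha\le\pi/3$ is needed. With the $\beta$-bound repaired and the final inequality redone against the correct corner, your argument becomes essentially the paper's proof.
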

\begin{proof}
The derivative of $t(\alpha, \beta)$ with respect to $\alpha$ is 
\[
    \frac{\partial t}{\partial \alpha} =  \frac{\sin \alpha+\sin(\alpha+\beta)}{1+\cos(2\alpha+\beta)} > 0.
\]
This means that, for a fixed $\beta$ value, $t(\alpha, \beta)$ reaches its maximum when 
$\alpha$ is maximum. Similarly, the derivative of $t(\alpha, \beta)$ with respect to $\beta$ is 
\[
    \frac{\partial t}{\partial \beta} =  \frac{\sin \alpha}{2\cos(\alpha+\beta/2)^2} > 0.
\]
So for a fixed value $\alpha$ value, $t(\alpha, \beta)$ reaches its maximum when 
$\beta$ is maximum. Because $|ac| \le |ab|$, $\beta \le \angle{acb}$. The sum of these two angles is 
$\pi-\alpha$, therefore $\beta \le \pi/2-\alpha/2$. This along with the derivations above implies 
that, for a fixed value $\alpha \le \pi/3-\delta$, $t(\alpha, \beta) \le t(\alpha,\pi/2-\alpha/2) \le t(\pi/3-\delta, \pi/3+\delta/2)$ (we substituted $\alpha = \pi/3 -\delta$ in this latter inequality).
Next we evaluate
\[
  \frac{t(\pi/3, \pi/3-\delta)}{t(\pi/3-\delta, \pi/3+\delta/2)} = 
  \frac{\cos(\pi/6-\delta/2) \sin(3\delta/2)}{\cos(\pi/6+\delta/2)\sin(\delta/2)} > 1.
\]
It follows that $t(\pi/3, \pi/3-\delta)$ is maximal. 
\end{proof}

%
%
We are now ready to prove the main result of this section.

\begin{theorem}
 \label{thm:Y6spanner}
The graph $Y_6$ has spanning ratio at most $5.8$. 
\end{theorem}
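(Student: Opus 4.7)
The plan is to prove the theorem by induction on the pairwise distance $|ab|$, driven by Lem.~\ref{lem:AlphaBeta} and Lem.~\ref{lem:AlphaBetaMax}, with a threshold angle $\delta \in (0,\pi/3)$ tuned at the end. The base case, when $|ab|$ is minimal in $S$ and $\overrightarrow{ab} \notin \overrightarrow{Y_6}$, forces a tied nearest neighbor $c \in Q(a)$ with $|ac| = |ab|$ and $\angle cab \le \pi/3$; this in turn forces $\triangle abc$ to be equilateral, and tie-breaking lets us take $\overrightarrow{ab} \in \overrightarrow{Y_6}$ instead, resolving the base case.

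For the inductive step, let $\overrightarrow{ac}$ be the edge in the cone $Q(a)$ containing $b$ and, symmetrically, let $\overrightarrow{bd}$ be the edge in $Q(b)$ containing $a$. If $c = b$ we are done, so assume otherwise. Set $\alpha = \angle cab$, $\beta = \angle cba$, $\alpha' = \angle dba$, $\beta' = \angle dab$. The cone width gives $\alpha, \alpha' \le \pi/3$, and together with $|ac|,|bd| \le |ab|$ the law of cosines yields $|bc| < |ab|$ and $|ad| < |ab|$ (apart from the degenerate equilateral case, handled as in the base case), so the inductive hypothesis applies to $(c,b)$ and $(a,d)$. Lem.~\ref{lem:AlphaBeta} then writes $|ac| = t(\alpha,\beta)(|ab| - |bc|)$ and $|bd| = t(\alpha',\beta')(|ab| - |ad|)$.

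The analysis splits into two cases via Lem.~\ref{lem:AlphaBetaMax}. In the easy case, at least one of $\alpha, \beta, \alpha', \beta'$ is $\le \pi/3 - \delta$, and Lem.~\ref{lem:AlphaBetaMax} bounds the corresponding $t(\cdot,\cdot)$ by $T_1 := \cos(\pi/6 - \delta/2)/\sin(\delta/2)$; induction on $(c,b)$ (or $(a,d)$) then gives $p(a,b) \le |ac| + T\cdot|cb| \le T\cdot|ab|$ whenever $T \ge T_1$. In the hard case, all four angles exceed $\pi/3 - \delta$, so both $\triangle abc$ and $\triangle abd$ are nearly equilateral, with $c$ and $d$ near the cone boundaries opposite $b$ and $a$ respectively. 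Here I would bound $|cd|$ trigonometrically from the near-equilateral constraints (subdividing as needed on the relative sides of line $ab$), verify $|cd| < |ab|$ so the inductive hypothesis applies to $(c,d)$, and conclude $p(a,b) \le |ac| + T\cdot|cd| + |db| \le T\cdot|ab|$, which produces a second constraint on $T$ that grows with $\delta$. Equating the two constraints optimizes $\delta$ and should yield $T \approx 5.8$.

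The main obstacle I expect is the hard case: the joint geometry of two near-equilateral triangles sharing $ab$ admits several configurations of $c$ and $d$ (including them being on opposite sides of the line $ab$), each requiring a tight bound $|cd| \le f(\delta)|ab|$. Arranging the trigonometry so that the easy-case bound $T \ge \cos(\pi/6 - \delta/2)/\sin(\delta/2)$ and the hard-case bound coincide exactly at $5.8$ is the principal computational difficulty, rather than a conceptual one.
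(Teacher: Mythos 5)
Your overall frame (induction on $|ab|$, the threshold angle $\delta$, the easy case handled by Lem.~\ref{lem:AlphaBeta}/Lem.~\ref{lem:AlphaBetaMax}) matches the paper, but your hard case contains a genuine gap that the paper's proof is specifically engineered to avoid. You take $d$ to be the Yao neighbor of $b$ in the cone of $b$ \emph{containing $a$}, and when all four angles exceed $\pi/3-\delta$ you hope to bound $|cd|$ and route $a\to c\rightsquigarrow d\to b$. But in that configuration the geometry forces $c$ and $d$ onto \emph{opposite} sides of the line $ab$: with $b$ near the lower ray of $Q_0(a)$ and $\angle cab$ near $\pi/3$, the point $c$ sits near the apex of the equilateral triangle above $ab$, while $\angle dba$ near $\pi/3$ (measured inside the cone of $b$ containing $a$) places $d$ near the apex of the equilateral triangle \emph{below} $ab$. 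Then $|cd|\approx\sqrt{3}\,|ab|$, no bound of the form $|cd|\le f(\delta)|ab|$ with $f(\delta)<1$ is available, and the charging $|ac|+T|cd|+|db|\le T|ab|$ cannot be made to work for any $T$. This is not a configuration you can subdivide away; it is the generic hard configuration.

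The paper's fix is to choose $d$ differently: $d$ is the Yao neighbor of $b$ in the cone $Q_2(b)$ \emph{containing $c$} (up-and-left from $b$), so that the Yao property $|bd|\le|bc|$ traps $d$ in the same small intersection region of the two $2\delta$-cones as $c$; Lemma~\ref{lem:cd-close} then gives $|cd|\le \frac{\sin(2\delta)}{\sin(\pi/6+2\delta)}|ab|$, which is where the second constraint $t\ge 2/\bigl(1-\frac{\sin(2\delta)}{\sin(\pi/6+2\delta)}\bigr)$ and the optimal $\delta_0=0.324$ come from. Even this is not enough: when $c$ does not lie in $Q_2(b)$ at all, the paper needs a separate ``special situation'' analysis (cases 3(a) and 3(b), via the further Yao neighbors $z$ of $c$, $e$ of $a$ in $Q_5(a)$, and $f$ of $b$ in $Q_3(b)$, plus Lemma~\ref{lem:special}). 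That machinery is the actual substance of the theorem and is absent from your proposal. (Minor additional point: in the base case the half-openness of the cones already forces $|bc|<|ab|$ strictly, so no equilateral tie arises and no appeal to tie-breaking is needed.)
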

This result follows from the following lemma, with the variable $\delta$ substituted by the quantity $\delta_0 = 0.324$ that minimizes $t(\delta)$. (It can be easily verified that $t(\delta) \ge t(0.324)$ and $ t(0.324) < 5.8$.)

\begin{lemma}
Let $\delta  \in (0, \pi/9)$ be a strictly positive real value. The graph $Y_6$ has spanning ratio bounded above by  
\begin{equation}
t = t(\delta) = \max\left\{\frac{\cos(\pi/6-\delta/2)}{\sin(\delta/2)}, ~\frac{2}{1-\frac{\sin(2\delta)}{\sin(\pi/6+2\delta)}}\right\}.
\label{eq:t}
\end{equation}
\label{lem:main}
\end{lemma}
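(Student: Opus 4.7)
My plan is to prove the lemma by strong induction on the Euclidean distance $|ab|$, ranging over all pairs of points in $S$, assuming the bound $p(a',b') \le t\,|a'b'|$ already holds for every pair strictly closer than $|ab|$. For the base case, when $|ab|$ is minimal among pairwise distances in $S$, the cone angle being exactly $\pi/3$ combined with $|ac| \le |ab|$ (for $c$ the Yao neighbor of $a$ in $Q(a)$) forces $|bc| \le |ab|$ by the law of cosines; minimality then pins $c = b$, so $\overrightarrow{ab} \in \overrightarrow{Y_6}$ and $p(a,b) = |ab| \le t\,|ab|$ (degenerate ties where several candidates are equidistant are handled by the length-$2|ab|$ detour $a \to c \to b$, which is fine since $t \ge 2$ for all $\delta$ in the allowed range).

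For the inductive step, let $c$ be $a$'s Yao neighbor in $Q(a)$; assume $c \neq b$, so $|ac| \le |ab|$ and $|bc| < |ab|$. Set $\alpha = \angle cab$ and $\beta = \angle cba$. The natural dichotomy comes straight from the hypothesis of Lem.~\ref{lem:AlphaBetaMax}. In \emph{Case 1}, where $\alpha \le \pi/3 - \delta$ or $\beta \le \pi/3 - \delta$, Lem.~\ref{lem:AlphaBetaMax} combined with Lem.~\ref{lem:AlphaBeta} gives $|ac| \le \bigl(\cos(\pi/6 - \delta/2)/\sin(\delta/2)\bigr)\,(|ab| - |bc|) \le t(|ab| - |bc|)$. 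Applying the induction hypothesis to the pair $(c, b)$ then produces
\[
p(a,b) \le |ac| + p(c,b) \le t(|ab| - |bc|) + t\,|bc| = t\,|ab|,
\]
which is exactly the content of the first term in the maximum defining $t(\delta)$.

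\emph{Case 2} is when $\alpha > \pi/3 - \delta$ and $\beta > \pi/3 - \delta$ simultaneously; here $\triangle abc$ is forced to be close to equilateral, the ratio $t(\alpha,\beta)$ from Lem.~\ref{lem:AlphaBeta} may exceed $t$, and the Case 1 argument breaks. I would then bring in the second Yao edge by letting $d$ be $b$'s Yao neighbor in the cone $Q(b)$ containing $a$. If the angle hypothesis of Lem.~\ref{lem:AlphaBetaMax} is satisfied for $\triangle abd$, swap the roles of $a$ and $b$ and re-apply Case 1 to that pair. Otherwise, both $\triangle abc$ and $\triangle abd$ are simultaneously near-equilateral; in this doubly-hard sub-case the plan is to combine the two edges $ac$ and $bd$ with an inductive bound on an intermediate pair (most naturally $(c,d)$), and use a careful trigonometric analysis of the constrained angular configuration to conclude $p(a,b) \le t\,|ab|$. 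The tight constant that emerges from this worst-case analysis is exactly $\sin(2\delta)/\sin(\pi/6 + 2\delta)$, giving the second term $2/\bigl(1 - \sin(2\delta)/\sin(\pi/6+2\delta)\bigr)$ in the maximum, and the optimal $\delta_0 \approx 0.324$ arises by balancing the two terms.

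The main obstacle is the doubly-hard sub-case of Case 2. The constraints $\alpha, \beta > \pi/3 - \delta$ together with their analogs at $b$ pin $c$ near one boundary of $Q(a)$ and $d$ near one boundary of $Q(b)$, so that the quadrilateral $a c b d$ approaches two equilateral triangles glued along $ab$, with $|ac|$ and $|bd|$ both close to $|ab|$. To close the induction I must simultaneously (i) identify an intermediate pair whose distance is strictly smaller than $|ab|$ (so that the induction hypothesis is applicable), and (ii) establish the tight worst-case trigonometric inequality over the admissible angular ranges that produces the constant $\sin(2\delta)/\sin(\pi/6 + 2\delta)$. Handling this corner of the configuration space without slack is the technical heart of the proof.
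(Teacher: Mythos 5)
Your skeleton --- induction on $|ab|$, the dichotomy driven by Lem.~\ref{lem:AlphaBetaMax}, a swap of the roles of $a$ and $b$ using $d$, and a fallback route $a \to c$, then a path from $c$ to $d$, then $d \to b$ when both $\triangle abc$ and $\triangle abd$ are near-equilateral --- is the same as the paper's, and your Case~1 is exactly the paper's \emph{Induct} step (Lem.~\ref{lem:Induct}). The problem is that the proposal stops precisely where the proof becomes hard, and it misses one configuration entirely. For the ``doubly-hard'' sub-case you need three facts you do not establish: (i) $|bd| \le |ab|$, so that the detour costs at most $2|ab|$ plus the middle leg (the paper gets this from the equilateral triangle $\triangle bui$ cut out by the cone boundaries); (ii) the quantitative bound $|cd| \le \frac{\sin(2\delta)}{\sin(\pi/6+2\delta)}\,|ab|$, which is a standalone lemma (Lem.~\ref{lem:cd-close}) requiring its own optimization over the tilt of $ab$ and a further case split on whether $d$ escapes the intersection quadrilateral; and (iii) $|cd| < |ab|$ so the induction hypothesis applies at all. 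Announcing that the constant ``emerges from a careful trigonometric analysis'' is reverse-engineering it from the statement being proved, not deriving it.

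More seriously, the $(c,d)$ strategy only works when $c \in Q_2(b)$: the confinement of $d$ near $c$ comes from $|bd| \le |bc|$, which needs $c$ and $d$ to lie in the same cone of $b$. When $c$ sits in the upper $\delta$-cone of $Q_0(a)$ but to the right of the upper ray of $Q_2(b)$ --- a perfectly feasible position --- there is no relation between $|bd|$ and $|bc|$, no bound on $|cd|$, and your plan has no remaining move. The paper devotes its entire ``special situation'' (Case~3, Figs.~\ref{fig:yao6case0} and~\ref{fig:yao6case1}) to exactly this: it introduces a third Yao edge $\overrightarrow{cz}$ out of $c$ toward $b$, and when that still fails it reroutes the path \emph{below} the segment $ab$ through edges $\overrightarrow{ae}$ and $\overrightarrow{bf}$, proving via Lem.~\ref{lem:special} that at least one of $e \in Q_3(b)$ or $f \in Q_5(a)$ must hold so that one of the earlier cases applies. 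Nothing in your outline anticipates needing a third or fourth Yao edge, so the induction as described cannot be completed.
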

\begin{proof}
The proof is by induction on the pairwise distance between pairs of points $a, b \in S$. 
Without loss of generality let $b \in Q_0(a)$. 

\paragraph{Base case} We show that, if $|ab|$ is minimal, then 
$\overrightarrow{ab} \in \overrightarrow{Y_6}$ and so $p(a,b) = |ab|$.
%
If $\overrightarrow{ab} \in \overrightarrow{Y_6}$, 
then the lemma holds.  So assume that $\overrightarrow{ab} \not\in \overrightarrow{Y_6}$; 
we will derive a contradiction. Because $\overrightarrow{ab} \not\in \overrightarrow{Y_6}$, there must be 
another point $c \in Q_0(a)$ such that $\overrightarrow{ac} \in \overrightarrow{Y_6}$ and 
$|ac|=|ab|$. 
Let $\alpha_1$ and $\alpha_2$ be the angles that $ab$ and $ac$ make with the horizontal respectively.
Because both $\alpha_1,\alpha_2 \in [0,\pi/3)$, necessarily $|\alpha_1 - \alpha_2| < \pi/3$.
Thus $|bc| < |ab|=|ac|$, contradicting the assumption that $|ab|$ is minimal.  
So in fact it must be that $\overrightarrow{ab} \in \overrightarrow{Y_6}$, and the lemma is established.

\paragraph{Main idea of the inductive step} 
It has already been established that $Y_7$ is a spanner~\cite{bose2012piArxiv}; 
the sector angles for $Y_7$ are $2\pi/7$.
The main idea of our inductive proof is to partition the 
$\pi/3$-sectors of $Y_6$ into peripheral cones of angle $\delta$, for some 
fixed $\delta \in (0,\pi/9)$, leaving a central sector of angle $\pi/3 - 2\delta$.
(The $\delta$-cones are the shaded regions in Figure~\ref{fig:Y6SectorDelta1}.) 


When an edge of $Y_6$ falls inside the central sector, induction will apply, 
because an edge within the central sector makes definite progress toward
the goal in that sector (as it does in $Y_7$), ensuring that the remaining distance to be covered is
strictly smaller than the original. This idea is captured by the following lemma.
\begin{lemma}[Induction Step]
Let $a, b, c \in S$ such that $b$ and $c$ lie in the same cone with apex $a$, and 
$\overrightarrow{ac} \in \overrightarrow{Y_6}$. Let $\alpha = \angle cab$ and $\beta = \angle cba$. 
If either $\alpha <  \pi/3 - \delta$ or $\beta <  \pi/3 -\delta$, then 
we may use induction on $p(c,b)$ to conclude that $p(a,b) \le t |ab|$.
\label{lem:Induct}
\end{lemma}
\begin{proof}
This configuration is depicted in Figure~\ref{fig:TriangleNotation}.
Because $\overrightarrow{ac} \in \overrightarrow{Y_6}$ and $b$ and $c$ lie in the same cone 
with apex $a$, we have that $|ac| \le |ab|$. 
Because at least one of $\alpha$ or $\beta$ is strictly smaller than $\pi/3$, 
we have that $|cb| < |ab|$. 
Thus the conditions of Lemma~\ref{lem:AlphaBeta} are satisfied, so we can 
use Lemma~\ref{lem:AlphaBeta} to bound $|ac|$ in terms of $x = |ab|-|bc|$: since $|ac|/x < t$, $|ac| < t x$.
Because $|cb| < |ab|$, we may apply induction to bound $p(c,b)$:  
$p(c, b) \le t |cb|$. Hence 
\[
 p(a,b)  \le  |ac| + p(c,b) \le  t x + t |cb|  =  t( x + |cb| ) =  t |ab|. \qedhere
\]
\end{proof}
We will henceforth use the symbol \fbox{\emph{Induct}} as shorthand for applying 
Lemma~\ref{lem:Induct} to a triangle equivalent to that in
Figure~\ref{fig:TriangleNotation}. 

Lemma~\ref{lem:Induct} leaves out $Y_6$ edges falling within the $\delta$-cones, that 
could conceivably \emph{not} make progress 
toward the goal. For example, following one edge of an equilateral triangle leaves one 
exactly as far away from the other corner as at the start.
However, we will see that when all relevant edges of $Y_6$ fall within the $\delta$-cones
, the restricted geometric structure ensures that progress toward the goal is 
indeed made, and again induction applies.

\paragraph{Inductive step}  
The inductive step proof first handles the cases where edges of $Y_6$ directed from $a$ or from $b$  fall in the central 
portion of the relevant sectors, and so satisfy Lemma~\ref{lem:AlphaBetaMax}, and so 
Lemma~\ref{lem:Induct} applies.

Recall that $b \in Q_0(a)$ by our assumption. 
If $\overrightarrow{ab} \in \overrightarrow{Y_6}$, 
then $p(a,b)=|ab|$ and we are finished. Assuming otherwise, there must be a point $c \in Q_0(a)$ such that  
$\overrightarrow{ac} \in \overrightarrow{Y_6}$ and $|ac| \le |ab|$. For the remainder of the proof, we are in 
this situation
.
The proof now partitions into three parts: 
(1)~when only $Q_0(a)$ is relevant and leads to \fbox{\emph{Induct}};
(2)~when $Q_2(b)$ leads to \fbox{\emph{Induct}};
(3)~when we fall into a special situation, for which induction also applies, but for different reasons.

\paragraph{(1)~The {\large $\boldsymbol{Q_0(a)}$} sector}
Consider $\triangle abc$ as previously illustrated in Figure~\ref{fig:TriangleNotation}.
If either $b$ or $c$ is not in one of the $\delta$-cones of $Q_0(a)$, then 
$\alpha = \angle bac < \pi/3 - \delta$: \fbox{\emph{Induct}.}

%
\begin{figure}[htb]
    \centering
    \begin{tabular}{ccc}
        \includegraphics{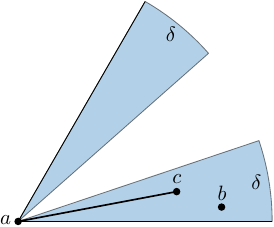} & 
        \includegraphics{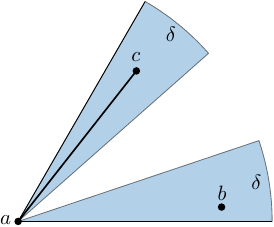} & 
        \includegraphics{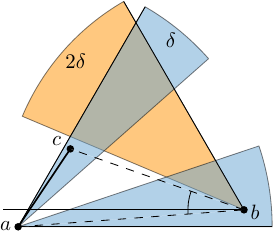} \\
        (a) & (b) & (c) 
    \end{tabular}
    \caption{Different cases for the location of $c$; the $\delta$-cones are shaded. (a)~$b$ and $c$ in the same $\delta$-cone. (b)~$b$ and $c$ in different $\delta$-cones. (c)~$c$ in the upper $\delta$-cone of $Q_0(a)$, but not in the upper $2\delta$-cone of $Q_2(b)$; here 
$\angle abc$ is ``small''.}
    \label{fig:Y6SectorDelta1}
\end{figure}
%

Now assume that both $b$ and $c$ lie in $\delta$-cones of $Q_0(a)$. 
If they both lie within the same $\delta$-cone (Figure~\ref{fig:Y6SectorDelta1}a),
then again $\alpha$ is small: \fbox{\emph{Induct}.} 
So without loss of generality let $b$ lie in the lower $\delta$-cone, and $c$ in the upper
$\delta$-cone of $Q_0(a)$; see Figure~\ref{fig:Y6SectorDelta1}b. 
We cannot apply induction in this situation because the 
ratio $s/x$ in Lemma~\ref{lem:AlphaBeta} has no upper bound.

\paragraph{(2)~The {\large $\boldsymbol{Q_2(b)}$} sector}
Now we consider $Q_2(b)$, the sector with apex at $b$ aiming to the left of $b$, 
and assume that $c \in Q_2(b)$. Refer to Figure~\ref{fig:Y6SectorDelta1}c. 
The case $c \notin Q_2(b)$ will be discussed later (special situation).

Because $b$ may subtend an angle as large as $\delta$ at $a$ with the horizontal,
the ``upper $2\delta$-cone'' of $Q_2(b)$ becomes the relevant region.
If $c$ is not in the upper $2\delta$-cone of $Q_2(b)$ (as depicted in Figure~\ref{fig:Y6SectorDelta1}c),
then $\triangle abc$ satisfies Lemma~\ref{lem:AlphaBetaMax} with 
$\angle abc < \pi/3 - \delta$: 
\fbox{\emph{Induct}.}
Note that this conclusion follows even if $c$ is in the small region
outside of and below $Q_2(b)$: the angle 
$\angle abc$ at $b$ is then very small.

Assume now that $c$ is in the upper $2\delta$-cone of $Q_2(b)$. 
Let $d \in Q_2(b)$ be the point such that  $\overrightarrow{bd} \in \overrightarrow{Y_6}$.
We now consider possible locations for $d$.
If $d = c$, then $p(a,b) \le |ac| + |cb| \le 2 |ab|$, and we are finished.
So assume henceforth that $d$ is distinct from $c$.


\begin{figure}[hpt]
    \centering
	\begin{tabular}{ccc}
        \includegraphics{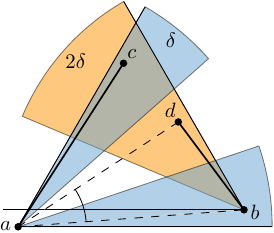} & 
        \includegraphics{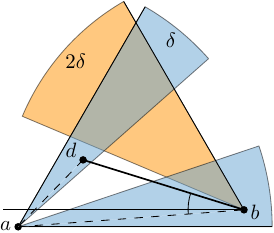} & 
        \includegraphics{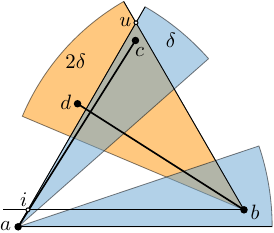} \\
        (a) & (b) & (c) 
    \end{tabular}
    \caption{Different cases for the location of $d$. (a)~$d$ not in the upper $\delta$-cone of $Q_0(a)$: $\angle bad$ is small. (b)~$d$ not in the upper $2\delta$-cone of $Q_2(b)$: $\angle abd$ is small. (c) $d$ in the upper $2\delta$-cone of $Q_2(b)$, but not necessarily in the upper $\delta$-cone of $Q_0(a)$: $|cd| < |ab|$~(Lemma~\protect\ref{lem:cd-close}).}
    \label{fig:Pointd}
\end{figure}

If $d$ is not in the upper $\delta$-cone of $Q_0(a)$ (Figure~\ref{fig:Pointd}a), then
$\triangle abd$ satisfies  Lemma~\ref{lem:AlphaBetaMax} with the roles of $a$ and $b$ 
reversed: $bd$ takes a step toward $a$, with the angle at $a$ satisfying 
$\angle bad < \pi/3 - \delta$: \fbox{\emph{Induct}.}

If $d$ is not in the upper $2\delta$-cone of $Q_2(b)$ (Figure~\ref{fig:Pointd}b), then
$\triangle abd$ satisfies Lemma~\ref{lem:AlphaBetaMax} again with
the roles of $a$ and $b$ reversed and this time the angle at $b$ bounded away
from $\pi/3$, $\angle abd < \pi/3 - \delta$: \fbox{\emph{Induct}.}

Assume now that $d$ is in the upper $2\delta$-cone of $Q_2(b)$.  
Recall that we are in the situation where $c$ 
also lies in the upper $2\delta$-cone of $Q_2(b)$, so it is close to $d$.
(Note however that $c$ and $d$ may lie on either the same side, or on opposite sides of the upper ray of $Q_0(a)$.) 
See Figure~\ref{fig:Pointd}c.
This suggests the strategy of following $ac$ and $db$, connected by $p(c, d)$.
We show that in fact $|cd| < |ab|$, so the inductive hypothesis can be applied to $p(c,d)$. 
More precisely, we show the following result.

\begin{lemma}
Let $a, b, c, d \in S$ be as in Figure~\ref{fig:Pointd}c, with  
$\overrightarrow{bd} \in \overrightarrow{Y_6}$, $b, c \in Q_0(a)$ and 
$c, d \in Q_2(b)$. If both $c$ and $d$ lie above the lower rays bounding
the upper 
$\delta$-cone of $Q_0(a)$ and the upper $2\delta$-cone of $Q_2(b)$, then for any $0 \le \delta \le \pi/9$,
\begin{equation}
 |cd| \le \frac{\sin(2\delta)}{\sin(\pi/6+2\delta)}|ab|.
\label{eq:cd}
\end{equation}
\label{lem:cd-close}
\end{lemma}
%
Note that $c$ lies in the intersection region between the upper $\delta$-cone of $Q_0(a)$ and the upper 
$2\delta$-cone of $Q_2(b)$, because $c \in Q_0(a) \cap Q_2(b)$ (by the statement of the lemma). However, Lemma~\ref{lem:cd-close} 
does not restrict the location of $d$ to the same region. Indeed, $d$ may lie either below or above the upper ray 
bounding $Q_0(a)$, as long as it satisfies the condition $|bd| \le |bc|$. (This condition must hold because $c, d$ are in the same 
sector $Q_2(b)$, and $\overrightarrow{bd} \in \overrightarrow{Y_6}$.) 
To keep the flow of our main proof uninterrupted, we defer a proof of Lemma~\ref{lem:cd-close} to Section~\ref{sec:cd-close}.


By Lemma~\ref{lem:cd-close} we have $|cd| < |ab|$. Thus we can use the induction hypothesis to 
show that $p(c, d) \le t|cd|$. 
We know that $|ac| \le |ab|$, because both $b$ and $c$ are in $Q_0(a)$ and
$\overrightarrow{ac} \in \overrightarrow{Y_6}$.
We also know that $|bd| \le |bc|$ because both $c$ and $d$ are in $Q_2(b)$
and $\overrightarrow{bd} \in \overrightarrow{Y_6}$.
Let $u$ and $i$ be the upper and lower intersection points between the rays bounding $Q_2(b)$ and 
the upper ray of $Q_0(a)$, as in Figure~\ref{fig:Pointd}c. Note that $\triangle bui$ is equilateral,
and because $c$ lies in this triangle, we have $|bc| \le |bu| = |bi| \le |ab|$. It follows that 
$|bd| \le |ab|$. 
So in this situation (illustrated in Figure~\ref{fig:Pointd}c), we have:
\begin{align*}
p(a,b) &\le |ac| + p(c,d) + |bd| \\
       &\le 2 |ab| + p(c,d) \\
       &\le 2 |ab| + t |cd| \\
       &\le  2 |ab| + t \frac{\sin(2\delta)}{\sin(\pi/6+2\delta)}|ab| \\
       &\le  t |ab|.
\end{align*}
Here we have applied Lemma~\ref{lem:cd-close} to bound $|cd|$. Note that the latter inequality above is true for the 
value of $t$ from~(\ref{eq:t}).

\newcommand{\nc}{\ensuremath{z}}

\paragraph{(3)~Special situation}
The only case left to discuss is the one in which $c$ lies in the upper $\delta$-cone of $Q_0(a)$ 
and to the right of the upper ray of  $Q_2(b)$. 
This situation is depicted in Figure~\ref{fig:yao6case0}.  
Next consider $Q_4(c)$. Because $b \in Q_4(c)$, there exists $\overrightarrow{c\nc} \in \overrightarrow{Y_6}$, 
with $\nc \in Q_4(c)$ and $|c\nc| \le |cb|$. Clearly $\nc \in Q_0(a) \cup Q_5(a)$. 
Note that the disk sector $\canon{a}{c} \subset Q_0(a)$ with center $a$ and radius $|ac|$ must be empty, 
because $\overrightarrow{ac} \in \overrightarrow{Y_6}$. 


\begin{figure}[htbp]
    \centering
    \begin{tabular}{c@{\hspace{0.1\linewidth}}c}
        \includegraphics{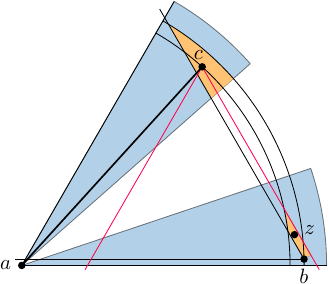} &
        \includegraphics{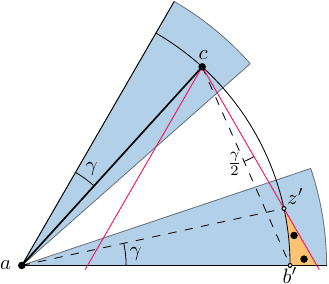} \\
        (a) & (b) 
    \end{tabular}
    \caption{(a) The special situation, with $c \notin Q_2(b)$ and $\nc \in Q_0(a)$. (b) Both $b$ and $z$ must lie in the highlighted region, so we can apply Lemma~\ref{lem:cd-close}.}
    \label{fig:yao6case0}
\end{figure}

\vspace{-1em}
\paragraph{Case 3(a)} If $\nc \in Q_0(a)$, then $\nc$ lies in the lower $\delta$-cone of $Q_0(a)$ and to 
the right of $\canon{a}{c}$, close to $b$. 
See Figure~\ref{fig:yao6case0}a. 
In this case we show that the quantity on the right side of inequality~(\ref{eq:cd}) 
is a loose upper bound on $|b\nc|$, and that similar inductive arguments hold here as well. 
Let the circumference of $\canon{a}{c}$ intersect the right ray of $Q_4(c)$ and the 
lower ray of $Q_0(a)$ at points $\nc' \neq c$ and $b'$, respectively. Refer to Figure~\ref{fig:yao6case0}b.
Let $\gamma \le \delta$ be the angle formed by $ac$ with the upper ray of $Q_0(a)$. 
Then $\angle \nc'ab' = \gamma$ and $\angle \nc'cb' = \gamma/2$ by the inscribed angle theorem.  
This implies that both $b'$ and $\nc'$ lie in the 
intersection region between the lower $\delta$-cone of $Q_0(a)$ and 
the right $\delta/2$-cone of $Q_4(c)$. 
Thus $a, b, c, \nc \in S$ satisfy the conditions of Lemma~\ref{lem:cd-close}, 
with the roles of $b$ and $c$ reversed:
$|b\nc| \le \sin(2\delta)/\sin(\pi/6+2\delta) \cdot |ac|$.


Arguments similar to the ones used 
in the proof of Lemma~\ref{lem:cd-close} 
show that $|c\nc| \le |ac|$. This along with $ |ac| \le |ab|$ (because $\overrightarrow{ac} \in \overrightarrow{Y_6}$) and 
the above inequality imply
\begin{eqnarray*}
p(a,b)  \le  |ac| + |c\nc| + p(\nc,b) 
 \le  2 |ab| + t |b\nc| 
 \le  t |ab|
\end{eqnarray*} 
for any $t$ satisfying the conditions stated by this lemma. 

%
\begin{figure}[ht]
    \centering
    \begin{tabular}{c@{\hspace{0.1\linewidth}}c}
        \includegraphics{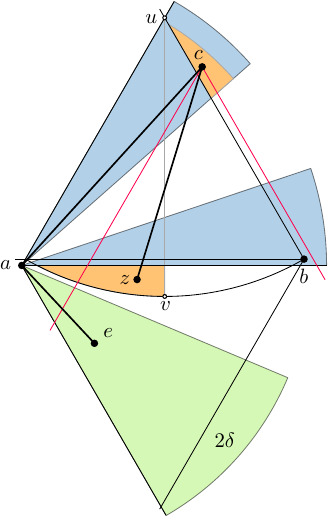} & 
        \includegraphics{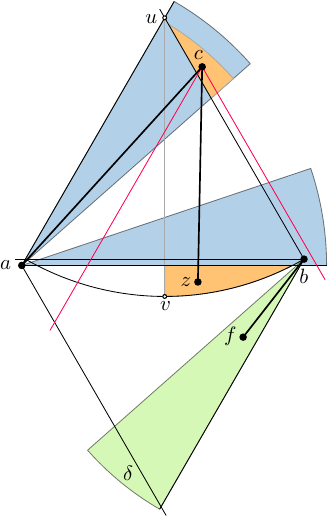} \\
        (a) & (b)  
    \end{tabular}
    \caption{The special situation, with $c \notin Q_2(b)$ and $\nc \in Q_5(a)$.
    (a) $\nc$ lies left of $uv$: $e \in Q_3(b)$.
    (b) $\nc$ lies right of $uv$: $f \in Q_5(a)$.}
    \label{fig:yao6case1}
\end{figure}
%
%
\vspace{-1em}
\paragraph{Case 3(b)} Assume now that $\nc \notin Q_0(a)$. Then $\nc \in  Q_5(a)$, as depicted in Figure~\ref{fig:yao6case1}. 
In this case $\nc$ lies in the disk sector $\canon{c}{b}$  (because $|c\nc| \le |cb|$) 
and below the horizontal line through $a$ (because $\canon{a}{c}$ is empty). 
This implies that there exists $\overrightarrow{ae} \in \overrightarrow{Y_6}$, with $e \in Q_5(a)$ and $|ae| \le |a\nc|$.
Similarly, there exists $\overrightarrow{bf} \in \overrightarrow{Y_6}$, with $f \in Q_3(b)$ and $|bf| \le |b\nc|$.
If $e$ lies above the lower $2\delta$-cone of $Q_5(a)$, then $\angle bae \le \pi/3-\delta$,
which leads to  \fbox{\emph{Induct}} and settles this case. 
Similarly, if $f$ lies above the lower $\delta$-cone of $Q_3(b)$, then $\angle abf \le \pi/3-\delta$,
which again leads to  \fbox{\emph{Induct}}. 
Otherwise, we show that the following lemma holds.

\begin{lemma}
Let $a, b, c, \nc \in S$ be in the configuration depicted in Figure~\ref{fig:yao6case1}, 
with $\overrightarrow{ac}, \overrightarrow{c\nc} \in \overrightarrow{Y_6}$. 
Let $\overrightarrow{ae}, \overrightarrow{bf} \in \overrightarrow{Y_6}$, with $e$ in the lower 
$2\delta$-cone of $Q_5(a)$ and $f$ in the lower $\delta$-cone of $Q_3(b)$. 
Then at least one of the following is true: (a) $e \in Q_3(b)$, or (b) $f \in Q_5(a)$.
\label{lem:special}
\end{lemma}
We defer a proof of Lemma~\ref{lem:special} to Section~\ref{sec:lemproof-special}.

Lemma~\ref{lem:special} guarantees that, if condition (a) holds, then $ae$ may not cross the 
lower ray bounding $Q_3(b)$. This case reduces 
to one of the cases depicted in Figure~\ref{fig:Pointd}, with $e$ playing the 
role of $c$ and the path passing under $ab$ rather than above.  Because $ae$ does not cross 
the lower ray bounding $Q_3(b)$, the special situation depicted in Figure~\ref{fig:yao6case0} 
(with $e$ playing the role of $c$) may not occur in this case. 
Similarly, condition (b) from Lemma~\ref{lem:special} reduces 
to one of the cases depicted in Figure~\ref{fig:Pointd}, with the roles of $a$ and 
$b$ reversed and with $f$ playing the role of $c$; the special situation depicted in Figure~\ref{fig:yao6case0} 
(with $bf$ playing the role of $ac$) may not occur in this case.  Having exhausted all cases, we conclude the proof. 
\end{proof}

Next we establish a lower bound on the spanning ratio of $Y_6$.

\begin{figure}[ht]
    \centering
    \begin{tabular}{c@{\hspace{0.1\linewidth}}c}
        \includegraphics{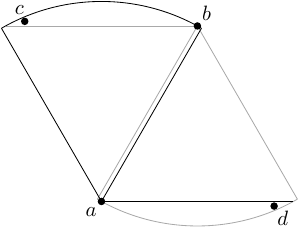} & 
        \includegraphics{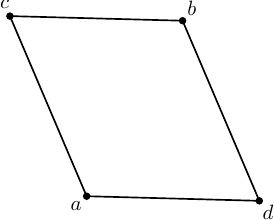} \\
        (a) & (b)  
    \end{tabular}
    \caption{$Y_6$ has spanning ratio at least $2$. (a) The lower bound construction. (b) The resulting $Y_6$ graph.}
    \label{fig:LowerBoundYao6}
  \end{figure}

\begin{theorem}
 The graph $Y_6$ has spanning ratio at least $2$.
\end{theorem}
\begin{proof}
We construct a lower bound example by extending the shortest path between two points $a$ and $b$.
Let $b \in Q_1(a)$ lie arbitrarily close to the cone boundary separating $Q_0(a)$ and $Q_1(a)$ (see Figure~\ref{fig:LowerBoundYao6}a).
Let $\varepsilon > 0$ be a small constant, and let $c \in Q_1(a)$ such that $|ac| = |ab| - \varepsilon$ and $|bc| \simeq |ab|$.
Similarly, let $d \in Q_4(b)$ such that $|bd| = |ab| - \varepsilon$ and $|ad| \simeq |ab|$.
Then the corresponding $Y_6$ graph is as depicted in Figure~\ref{fig:LowerBoundYao6}b.
Note that there are two shortest paths between $a$ and $b$, both of length $\simeq 2|ab|$.
\end{proof}

\section{Other lower bounds}
\label{sec:YaoLowerBounds}

In this section we provide lower bounds for all Yao graphs with at least six cones.
We do this by dividing this group of graphs in four families, depending on their number of cones.
We distinguish Yao graphs with $4x+2$ cones ($x \geq 1$), $4x+3$ cones, $4x+4$ cones, and $4x+5$ cones.
This division was introduced by Bose~\etal~\cite{bose2013spanning} to improve the analysis of $\Theta$-graphs.
It is based on the relation between the line perpendicular to the bisector of a cone and the cone boundaries.
For example, in a Yao graph with $4x+2$ cones, this line coincides with a cone boundary, whereas in a graph with $4x+4$ cones it coincides with the bisector of another cone.
For an overview of all bounds derived in this section, see Figure~\ref{fig:Graph} and Table~\ref{tab:results} in Section~\ref{sec:Conclusion}.

\subsection{Lower bound for \texorpdfstring{$\boldsymbol{Y_{4x+2}}$}{Y4x+2}}
In this section we provide a lower bound for Yao graphs with $4x+2$ cones ($x \geq 1$). 

\begin{theorem}
  For all $x \geq 1$, the graph $Y_{4x+2}$ has spanning ratio at least $1 + 2 \sin(\theta/2)$, where $\theta = 2\pi/(4x+2)$. 
\end{theorem}
\begin{proof}
  We construct the lower bound example by extending the shortest path between two vertices $a$ and $b$. We describe only how to extend one of the shortest paths between these vertices. To extend all shortest paths, the same modification is performed in each of the analogous cases (see Figure~\ref{fig:LowerBound4x+2}). 

  \begin{figure}[ht]
    \begin{center}
      \includegraphics{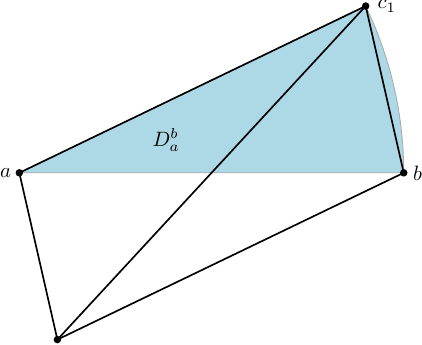}
    \end{center}
    \caption{The construction of the lower bound for $Y_{4x+2}$}
    \label{fig:LowerBound4x+2}
  \end{figure}

  First, we place $b$ arbitrarily close to a cone boundary of $a$. Next, we ensure that there is no edge between $a$ and $b$ by placing a vertex $c_1$ in the corner of \canon{a}{b} that is furthest from $b$ (see Figure~\ref{fig:LowerBound4x+2}). One of the shortest paths in the resulting graph visits $a$, $c_1$ and $b$. Thus, to obtain a lower bound for $Y_{4x+2}$, we compute the length of this path. 

  By construction, we have that $\angle c_1 a b = \theta$, hence we can express the various line segments as follows: 
  \begin{eqnarray*}
    |a c_1| &=& |a b| \\ \\ 
    |c_1 b| &=& 2 \sin \left( \frac{\theta}{2} \right) \cdot |a b|
  \end{eqnarray*}

  Hence, the total length of the shortest path is $|a c_1| + |c_1 b|$, which can be rewritten to \[\left( 1 + 2 \sin \left( \frac{\theta}{2} \right) \right) \cdot |a b|,\] proving the theorem. 
\end{proof}

\subsection{Lower bound for \texorpdfstring{$\boldsymbol{Y_{4x+3}}$}{Y4x+3}}
In this section we provide a lower bound for Yao graphs with $4x+3$ cones ($x \geq 1$). 

\begin{theorem}
  For all $x \geq 1$, the graph $Y_{4x+3}$ has spanning ratio at least \[1 + 2 \sin \left( \frac{3\theta}{8} \right) + 4 \frac{\left( \sin \left( \frac{13\theta}{16} \right) + \sin \left( \frac{19\theta}{16} \right)\right) \sin \left( \frac{\theta}{16} \right) \sin \left( \frac{3\theta}{8} \right)}{\sin(2\theta)},\] where $\theta = 2\pi/(4x+3)$. 
\end{theorem}
\begin{proof}
  We construct the lower bound example by extending the shortest path between two vertices $a$ and $b$ in three steps. We describe only how to extend one of the shortest paths between these vertices. To extend all shortest paths, the same modification is performed in each of the analogous cases (see Figure~\ref{fig:LowerBound4x+3}). 

  \begin{figure}[ht]
    \begin{center}
      \includegraphics{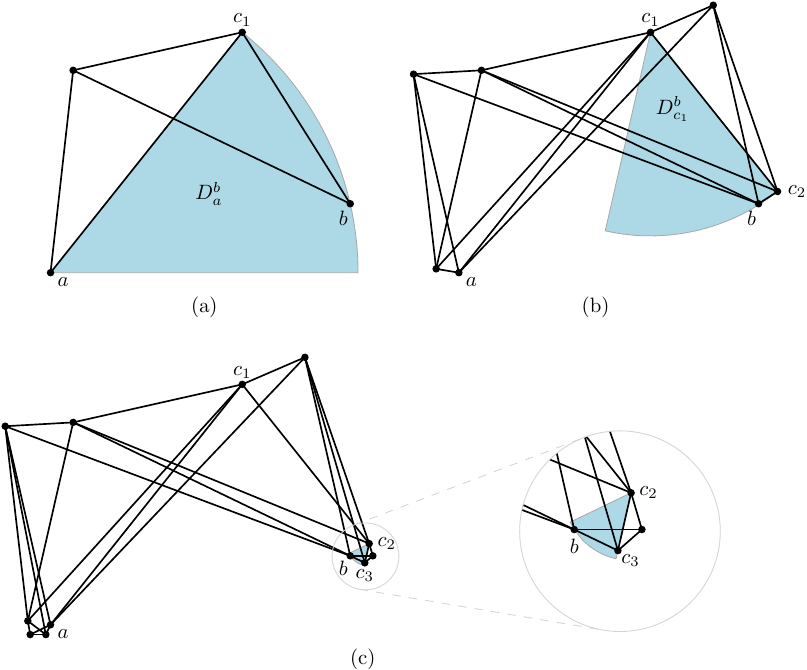}
    \end{center}
    \caption{The construction of the lower bound for $Y_{4x+3}$}
    \label{fig:LowerBound4x+3}
  \end{figure}

  First, we place $b$ such that the angle between $a b$ and the bisector of the cone of $a$ that contains $b$ is $\theta/4$. Next, we ensure that there is no edge between $a$ and $b$ by placing a vertex $c_1$ in the corner of \canon{a}{b} that is furthest from $b$ (see Figure~\ref{fig:LowerBound4x+3}a). Next, we place a vertex $c_2$ in the corner of \canon{c_1}{b} that is closest to $b$, since we need to ensure that there is no edge between $a$ and $c_2$ (see Figure~\ref{fig:LowerBound4x+3}b). Finally, we place a vertex $c_3$ in \canon{c_2}{b} such that \canon{c_3}{a} contains $b$ (see Figure~\ref{fig:LowerBound4x+3}c). This ensures that no shortcut to $a$ is created by $c_3$. One of the shortest paths in the resulting graph visits $a$, $c_1$, $c_2$, $c_3$ and $b$. Thus, to obtain a lower bound for $Y_{4x+3}$, we compute the length of this path. 

  \begin{figure}[ht]
    \begin{center}
      \includegraphics{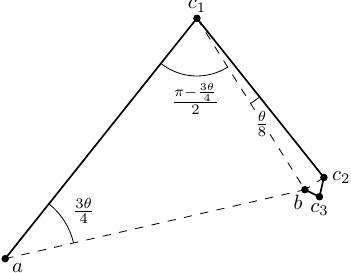}
    \end{center}
    \caption{The lower bound for $Y_{4x+3}$}
    \label{fig:LowerBoundComputation4x+3}
  \end{figure}

  By construction, we have that $\angle c_1 a b = 3\theta/4$. To compute the total length of the shortest path between $a$ and $b$, we also need $\angle c_2 c_1 b = \angle c_2 c_1 a - \angle b c_1 a$ (see Figure~\ref{fig:LowerBoundComputation4x+3}). Since $\angle c_2 c_1 a = \pi/2 - \theta/4$ and $\angle b c_1 a = (\pi - 3\theta/4) / 2$, it follows that $\angle c_2 c_1 b = \theta/8$. Finally, we need angles $\angle b c_3 c_2 = \pi - 2\theta$, $\angle c_3 c_2 b = \angle c_3 c_2 c_1 - \angle b c_2 c_1 = (\pi/2 + 3\theta/4) - (\pi/2 - \theta/16) = 13\theta/16$, and $\angle c_2 b c_3  = \pi - \angle b c_3 c_2 - \angle c_3 c_2 b = 19\theta/16$. Hence, we can express the various line segments as follows: 
  \begin{eqnarray*}
    |a c_1| &=& |a b| \\ \\ 
    |c_1 c_2| &=& |c_1 b| ~~=~~ 2 \sin \left( \frac{3\theta}{8} \right) \cdot |a b| \\ \\
    |c_2 b| &=& 2 \sin \left( \frac{\theta}{16} \right) \cdot |c_1 b| ~~=~~ 4 \sin \left( \frac{\theta}{16} \right) \sin \left( \frac{3\theta}{8} \right) \cdot |a b| \\ \\
    |c_2 c_3| &=& \frac{\sin \left( \frac{19\theta}{16} \right)}{\sin (\pi - 2\theta)} \cdot |c_2 b| ~~=~~ \frac{\sin \left( \frac{19\theta}{16} \right)}{\sin (2\theta)} \cdot|c_2 b| ~~=~~ 4\frac{\sin \left( \frac{19\theta}{16} \right) \sin \left( \frac{\theta}{16} \right) \sin \left( \frac{3\theta}{8} \right)}{\sin (2\theta)} \cdot |a b| \\ \\
    |c_3 b| &=& \frac{\sin \left( \frac{13\theta}{16} \right)}{\sin (\pi - 2\theta)} \cdot |c_2 b| ~~=~~ \frac{\sin \left( \frac{13\theta}{16} \right)}{\sin (2\theta)} \cdot|c_2 b| ~~=~~ 4\frac{\sin \left( \frac{13\theta}{16} \right) \sin \left( \frac{\theta}{16} \right) \sin \left( \frac{3\theta}{8} \right)}{\sin (2\theta)} \cdot |a b| 
  \end{eqnarray*}

  Hence, the total length of the shortest path is $|a c_1| + |c_1 c_2| + |c_2 c_3| + |c_3 b|$, which can be rewritten as
\[
 \left( 1 + 2 \sin \left( \frac{3\theta}{8} \right) + 4 \frac{\left( \sin \left( \frac{13\theta}{16} \right) + \sin \left( \frac{19\theta}{16} \right)\right) \sin \left( \frac{\theta}{16} \right) \sin \left( \frac{3\theta}{8} \right)}{\sin(2\theta)}\right) \cdot |a b|,
\]
proving the theorem. 
\end{proof}

\subsection{Lower bound for \texorpdfstring{$\boldsymbol{Y_{4x+4}}$}{Y4x+4}}
In this section we provide a lower bound for Yao graphs with $4x+4$ cones ($x \geq 1$). 

\begin{theorem}
  For all $x \geq 1$, the graph $Y_{4x+4}$ has spanning ratio at least
\[
  1 + 2 \sin \left( \frac{\theta}{2} \right) \left( 1 + \tan \left( \frac{\theta}{2} \right) \right),
\]
  where $\theta = 2\pi/(4x+4)$. 
\end{theorem}
\begin{proof}
  We construct the lower bound example by extending the shortest path between two vertices $a$ and $b$ in three steps. We describe only how to extend one of the shortest paths between these vertices. To extend all shortest paths, the same modification is performed in each of the analogous cases (see Figure~\ref{fig:LowerBound4x+4}). 

  \begin{figure}[ht]
    \begin{center}
      \includegraphics{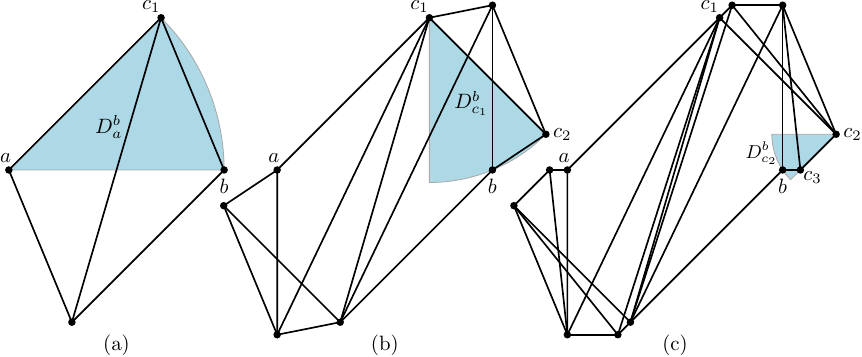}
    \end{center}
    \caption{The construction of the lower bound for $Y_{4x+4}$}
    \label{fig:LowerBound4x+4}
  \end{figure}

  First, we place $b$ arbitrarily close to a cone boundary of $a$. Next, we ensure that there is no edge between $a$ and $b$ by placing a vertex $c_1$ in the corner of \canon{a}{b} that is furthest from $b$ (see Figure~\ref{fig:LowerBound4x+4}a). Next, we place a vertex $c_2$ in the corner of \canon{c_1}{b} that is furthest from $a$ (see Figure~\ref{fig:LowerBound4x+4}b). Finally, we place a vertex $c_3$ on the intersection of \canon{c_2}{b} and \canon{b}{c_2} (see Figure~\ref{fig:LowerBound4x+4}c). This ensures that no shortcut to $a$ is created by $c_3$. One of the shortest paths in the resulting graph visits $a$, $c_1$, $c_2$, $c_3$ and $b$. Thus, to obtain a lower bound for $Y_{4x+4}$, we compute the length of this path. 

  \begin{figure}[ht]
    \begin{center}
      \includegraphics{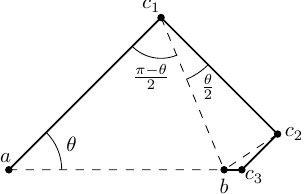}
    \end{center}
    \caption{The lower bound for $Y_{4x+4}$}
    \label{fig:LowerBoundComputation4x+4}
  \end{figure}

  By construction, we have that $\angle c_1 a b = \theta$. To compute the total length of the shortest path between $a$ and $b$, we also need $\angle c_2 c_1 b = \angle c_2 c_1 a - \angle b c_1 a$ (see Figure~\ref{fig:LowerBoundComputation4x+4}). Since $\angle c_2 c_1 a = \pi/2$ and $\angle b c_1 a = (\pi - \theta) / 2$, it follows that $\angle c_2 c_1 b = \theta/2$. Finally, we need angles $\angle b c_3 c_2 = \pi - \theta$, $\angle c_3 c_2 b = \angle c_3 c_2 c_1 - \angle b c_2 c_1 = \pi/2 - (\pi/2 - \theta/4) = \theta/4$, and $\angle c_2 b c_3 = \pi - \angle b c_3 c_2 - \angle c_3 c_2 b = 3\theta/4$. Hence, we can express the various line segments as follows: 
  \begin{eqnarray*}
    |a c_1| &=& |a b| \\ \\ 
    |c_1 c_2| &=& |c_1 b| ~~=~~ 2 \sin \left( \frac{\theta}{2} \right) \cdot |a b| \\ \\
    |c_2 b| &=& 2 \sin \left( \frac{\theta}{4} \right) \cdot |c_1 b| ~~=~~ 4 \sin \left( \frac{\theta}{4} \right) \sin \left( \frac{\theta}{2} \right) \cdot |a b| \\ \\
    |c_2 c_3| &=& \frac{\sin \left( \frac{3\theta}{4} \right)}{\sin (\pi - \theta)} \cdot |c_2 b| ~~=~~ \frac{\sin \left( \frac{3\theta}{4} \right)}{\sin (\theta)} \cdot|c_2 b| ~~=~~ 4\frac{\sin \left( \frac{3\theta}{4} \right) \sin \left( \frac{\theta}{4} \right) \sin \left( \frac{\theta}{2} \right)}{\sin (\theta)} \cdot |a b| \\ \\
    |c_3 b| &=& \frac{\sin \left( \frac{\theta}{4} \right)}{\sin (\pi - \theta)} \cdot |c_2 b| ~~=~~ \frac{\sin \left( \frac{\theta}{4} \right)}{\sin (\theta)} \cdot|c_2 b| ~~=~~ 4\frac{\sin \left( \frac{\theta}{4} \right) \sin \left( \frac{\theta}{4} \right) \sin \left( \frac{\theta}{2} \right)}{\sin (\theta)} \cdot |a b|
  \end{eqnarray*}

  Hence, the total length of the shortest path is $|a c_1| + |c_1 c_2| + |c_2 c_3| + |c_3 b|$, which can be rewritten to \[\left( 1 + 2 \sin \left( \frac{\theta}{2} \right) + 4\frac{\left( \sin \left( \frac{3\theta}{4} \right) + \sin \left( \frac{\theta}{4} \right) \right) \sin \left( \frac{\theta}{4} \right) \sin \left( \frac{\theta}{2} \right)}{\sin (\theta)} \right) \cdot |a b|,\] which is equal to 
\[ \left( 1 + 2 \sin \left( \frac{\theta}{2} \right) \left( 1 + \tan \left( \frac{\theta}{2} \right) \right) \right) \cdot |a b|, \]
 proving the theorem. 
\end{proof}

\subsection{Lower bound for \texorpdfstring{$\boldsymbol{Y_{4x+5}}$}{Y4x+5}}
In this section we provide a lower bound for Yao graphs with $4x+5$ cones ($x \geq 1$). 

\begin{theorem}
  For all $x \geq 1$, the graph $Y_{4x+5}$ has spanning ratio at least \[1 + 2 \sin \left( \frac{3\theta}{8} \right) + 4 \sin \left( \frac{5\theta}{16} \right) \sin \left( \frac{3\theta}{8} \right),\] where $\theta = 2\pi/(4x+5)$. 
\end{theorem}
\begin{proof}
  We construct the lower bound example by extending the shortest path between two vertices $a$ and $b$ in two steps. We describe only how to extend one of the shortest paths between these vertices. To extend all shortest paths, the same modification is performed in each of the analogous cases (see Figure~\ref{fig:LowerBound4x+5}). 

  \begin{figure}[ht]
    \begin{center}
      \includegraphics{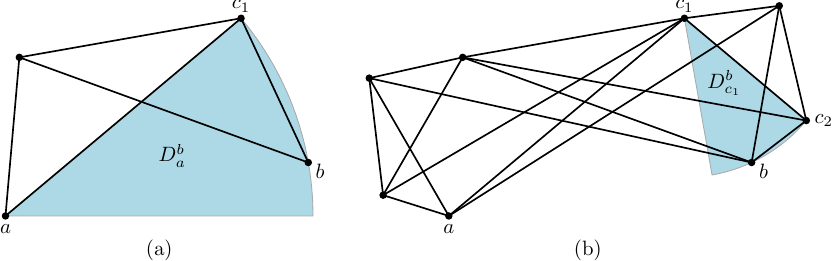}
    \end{center}
    \caption{The construction of the lower bound for $Y_{4x+5}$}
    \label{fig:LowerBound4x+5}
  \end{figure}

  First, we place $b$ such that the angle between $a b$ and the bisector of the cone of $a$ that contains $b$ is $\theta/4$. Next, we ensure that there is no edge between $a$ and $b$ by placing a vertex $c_1$ in the corner of \canon{a}{b} that is furthest from $b$ (see Figure~\ref{fig:LowerBound4x+5}a). Next, we place a vertex $c_2$ in the corner of \canon{c_1}{b} that is furthest from $b$ (see Figure~\ref{fig:LowerBound4x+5}b). One of the shortest paths in the resulting graph visits $a$, $c_1$, $c_2$ and $b$. Thus, to obtain a lower bound for $Y_{4x+5}$, we compute the length of this path. 

  \begin{figure}[ht]
    \begin{center}
      \includegraphics{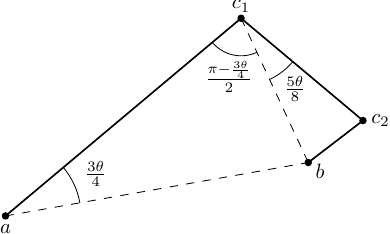}
    \end{center}
    \caption{The lower bound for $Y_{4x+5}$}
    \label{fig:LowerBoundComputation4x+5}
  \end{figure}

  By construction, we have that $\angle c_1 a b = 3\theta/4$. To compute the total length of the shortest path between $a$ and $b$, we also need $\angle c_2 c_1 b = \angle c_2 c_1 a - \angle b c_1 a$ (see Figure~\ref{fig:LowerBoundComputation4x+5}). Since $\angle c_2 c_1 a = \pi/2 + \theta/4$ and $\angle b c_1 a = (\pi - 3\theta/4) / 2$, it follows that $\angle c_2 c_1 b = 5\theta/8$. Hence, we can express the various line segments as follows: 
  \begin{eqnarray*}
    |a c_1| &=& |a b| \\ \\ 
    |c_1 c_2| &=& |c_1 b| ~~=~~ 2 \sin \left( \frac{3\theta}{8} \right) \cdot |a b| \\ \\
    |c_2 b| &=& 2 \sin \left( \frac{5\theta}{16} \right) \cdot |c_1 b| ~~=~~ 4 \sin \left( \frac{5\theta}{16} \right) \sin \left( \frac{3\theta}{8} \right) \cdot |a b|
  \end{eqnarray*}

  Hence, the total length of the shortest path is $|a c_1| + |c_1 c_2| + |c_2 b|$, which can be rewritten to \[\left( 1 + 2 \sin \left( \frac{3\theta}{8} \right) + 4 \sin \left( \frac{5\theta}{16} \right) \sin \left( \frac{3\theta}{8} \right)\right) \cdot |a b|,\] proving the theorem. 
\end{proof}

\subsection{The Yao-Yao graph with 5 cones is not a spanner}
\label{sec:YaoYao5}

One disadvantage of Yao graphs is that the maximum degree of a vertex might be $n - 1$.
For example, this happens when $n - 1$ points are spread evenly on a circle centered on the last point.
The \emph{Yao-Yao graph}, introduced by Li~\etal~\cite{li2002sparse}, solves this problem by first constructing the directed Yao graph, and then discarding all but the shortest incoming edge in each cone (ties are broken arbitrarily).
As a result, each vertex in the Yao-Yao graph with $k$ cones, denoted by $YY_k$, has maximum degree $2k$: one incoming and one outgoing edge per cone.
In the resulting graph, the directions of the edges are typically ignored.

Of course, discarding all these edges has a cost: the spanning ratio increases.
For a long time, it was unknown whether Yao-Yao graphs were even spanners.
The first answers to this question were negative.
Damian, El~Molla, and Pinciu~\cite{damian2009spanner,el2009yao} showed that, even though $Y_4$ and $Y_6$ are spanners, $YY_k$ is not a spanner for $k \leq 4$ and $k = 6$.
The first positive results followed soon afterwards, when Bauer and Damian~\cite{bauer2013infinite} proved that $YY_k$ is a spanner for all $k = 6x$, with $x \geq 6$.
The spanner status of all other Yao-Yao graphs is still open.

We close the gap among Yao-Yao graphs with six or fewer cones, by presenting a construction of a $YY_5$ graph whose stretch factor is unbounded.
Figure~\ref{fig:yy} shows the initial steps of constructing such a $YY_5$ graph, where the path between $a$ and $b$ can grow horizontally to the right by adding more points following the pattern, exceeding any bound on the stretch factor.

\begin{figure}[htb]
 \centering
 \begin{tabular}{c@{\hspace{0.1\linewidth}}c}
  \includegraphics{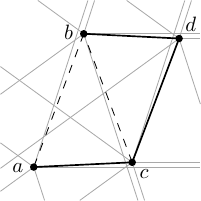} & 
  \includegraphics{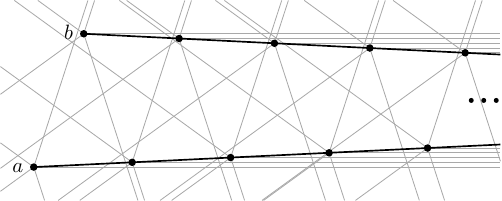} \\
  (a) & (b)  
 \end{tabular}
 \caption{The construction of a $YY_5$ graph with unbounded stretch factor. The gray lines are the boundaries of the cones. (a) The first four vertices. The dashed edges are in $Y_5$, but are discarded in favor of the bold edges. (b) The pattern continued to the right.}
 \label{fig:yy}
\end{figure}

The pattern begins with four vertices, positioned as in Figure~\ref{fig:yy}a. Vertex $b$ is placed close to the upper boundary of $Q_0(a)$, and $c$ is placed in $Q_0(a) \cap Q_4(b)$, near the intersection of their boundaries. Similarly, $d$ is placed in $Q_0(c) \cap Q_4(b)$ near the intersection of their boundaries. In the construction of the Yao-Yao graph on these points, the Yao edge $\overrightarrow{ba}$ is discarded since $\overrightarrow{ca}$ is shorter, and $\overrightarrow{cb}$ is discarded because $\overrightarrow{db}$ is shorter. By recursively applying this pattern with $c$ and $d$ in the roles of $a$ and $b$, we can eliminate the edge $cd$ and push the connecting edge arbitrarily far to the right. Since the distance between $a$ and $b$ remains the same, we can construct a $YY_5$ graph with arbitrarily large spanning ratio.

\begin{theorem}
 The graph $YY_5$ is not a constant spanner.
\end{theorem}

\section{Deferred proofs}

In this section, we prove the remaining lemmas required for the proof of Theorem~\ref{thm:Y6spanner}.

\subsection{Proof of Lemma~\ref{lem:cd-close}}
\label{sec:cd-close}

\textbf{Lemma~\ref{lem:cd-close}.}\emph{
Let $a, b, c, d \in S$ be as in Figure~\ref{fig:DeltaCones}a, with  
$\overrightarrow{bd} \in \overrightarrow{Y_6}$, 
$b, c \in Q_0(a)$ and $c, d \in Q_2(b)$. If both $c$ and $d$ lie above the lower rays bounding  
the upper 
$\delta$-cone of $Q_0(a)$ and the upper $2\delta$-cone of $Q_2(b)$, then for any $0 \le \delta \le \pi/9$,
\begin{equation*}
 |cd| \le \frac{\sin(2\delta)}{\sin(\pi/6+2\delta)}|ab|
\end{equation*}
}
\vspace{-1em}
\begin{proof}
Let $R$ be the intersection quadrilateral between the upper
$\delta$-cone of $Q_0(a)$ and the upper $2\delta$-cone of $Q_2(b)$. Let $u$ and $w$ be the top and bottom vertices of $R$, and $i$ and $z$ the left and right vertices of $R$, respectively. See Figure~\ref{fig:DeltaCones}a.

We first show that the diameter of $R$ is bounded above by $\max\{|ui|, |uw|\}$. Observe the following: (i) 
$\angle{uzw} = 2\pi/3-\delta \ge 5\pi/9$, therefore $|uw| > \max\{|uz|, |zw|\}$, (ii) $|iw| < |uz|$, and (iii) 
$|iz| \le \max\{|ui|, |uz|\}$, since $\angle{iuz} = \pi/3$ cannot exceed both other angles of $\triangle{uiz}$. It follows that 
the diameter of $R$ is no larger than $\max\{|ui|, |uw|\}$.


\begin{figure}[htbp]
\centering
\begin{tabular}{c@{\hspace{0.1\linewidth}}c}
\includegraphics{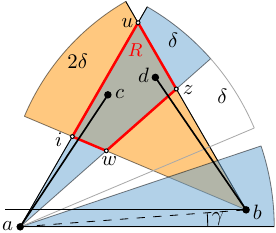} &
\includegraphics{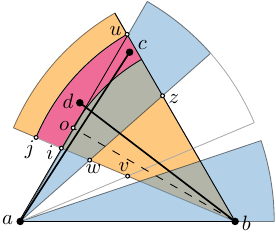} \\
(a) & (b) 
\end{tabular}
\caption{(a)~The configuration for Lemma~\ref{lem:cd-close}.
(b)~The distance $|cd|$ is still bounded by $|uv|$ when $d$ lies outside of $Q_0(a)$.
}
\label{fig:DeltaCones}
\end{figure}

Next we find an upper bound for $\max\{|uw|, |ui|\}$. 
Let $\gamma$ be the angle formed by $ab$ with the horizontal. 
We compute the quantities $|uw|$ and $|ui|$ as functions of $\delta$ and $\gamma$, and show that they are
maximized when $\gamma = 0$.
%
%
%
Set a coordinate system with the origin 
at $a$, and scale the point set $S$ so that $|ab| = 1$. 
Then the coordinates of $b$ are $(\cos \gamma, \sin \gamma)$. The point $u$ is at the 
intersection of the two lines passing through $a$ and $b$ 
with slopes $\tan \pi/3$ and $-\tan \pi/3$ respectively, given by
$y  =  \sqrt{3}x$ and $y  = -\sqrt{3}(x-\cos \gamma)+\sin \gamma$. 
Solving for $x$ and $y$ gives the coordinates of $u$
\begin{eqnarray*}
x_u  =  \frac{\sqrt{3}\cos \gamma+\sin \gamma}{2\sqrt{3}}, \quad\quad\
y_u  =  \frac{\sqrt{3}\cos \gamma+\sin \gamma}{2}.
\end{eqnarray*}

Similarly, the point $w$ is at the intersection of the line $L$ given by
$y = -\tan(\pi/3-2\delta)(x-\cos \gamma)+\sin \gamma$ and the line 
$y = \tan(\pi/3-\delta)x$; and $i$ is at the intersection of $L$ and the 
line $y = \tan(\pi/3)x = \sqrt{3}x$. 
Solving for $x$ and $y$ gives the coordinates of $w$ and $i$:

\begin{eqnarray*}
x_w  & = & \frac{\tan(\pi/3-2\delta)\cos \gamma+\sin \gamma}{\tan(\pi/3-\delta)+\tan(\pi/3-2\delta)}, \quad
y_w   =    \tan(\pi/3-\delta)x_w, \\
x_i  & = & \frac{\tan(\pi/3-2\delta)\cos \gamma+\sin \gamma}{\tan(\pi/3)+\tan(\pi/3-2\delta)}, \quad\quad\
y_i   =    \sqrt{3}x_i.
\end{eqnarray*}


\noindent
We can now compute $|uw| = \sqrt{(x_u-x_w)^2+(y_u-y_w)^2}$ as a function 
of $\gamma$ and $\delta$, and similarly for $|ui|$. 

\begin{figure}[htbp]
\centering
\begin{tabular}{c@{\hspace{0.08\linewidth}}c}
\includegraphics[width=0.4\linewidth]{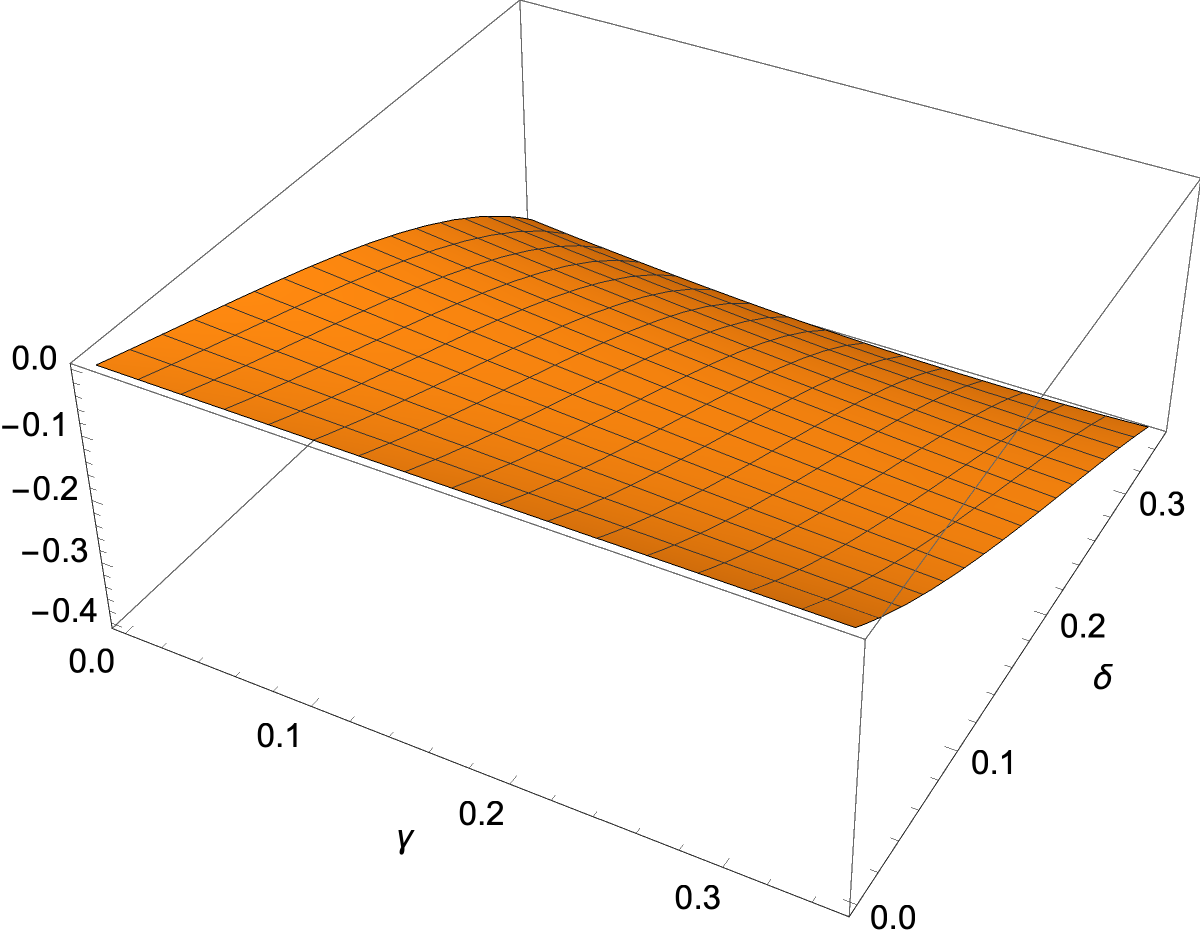} & 
\includegraphics[width=0.4\linewidth]{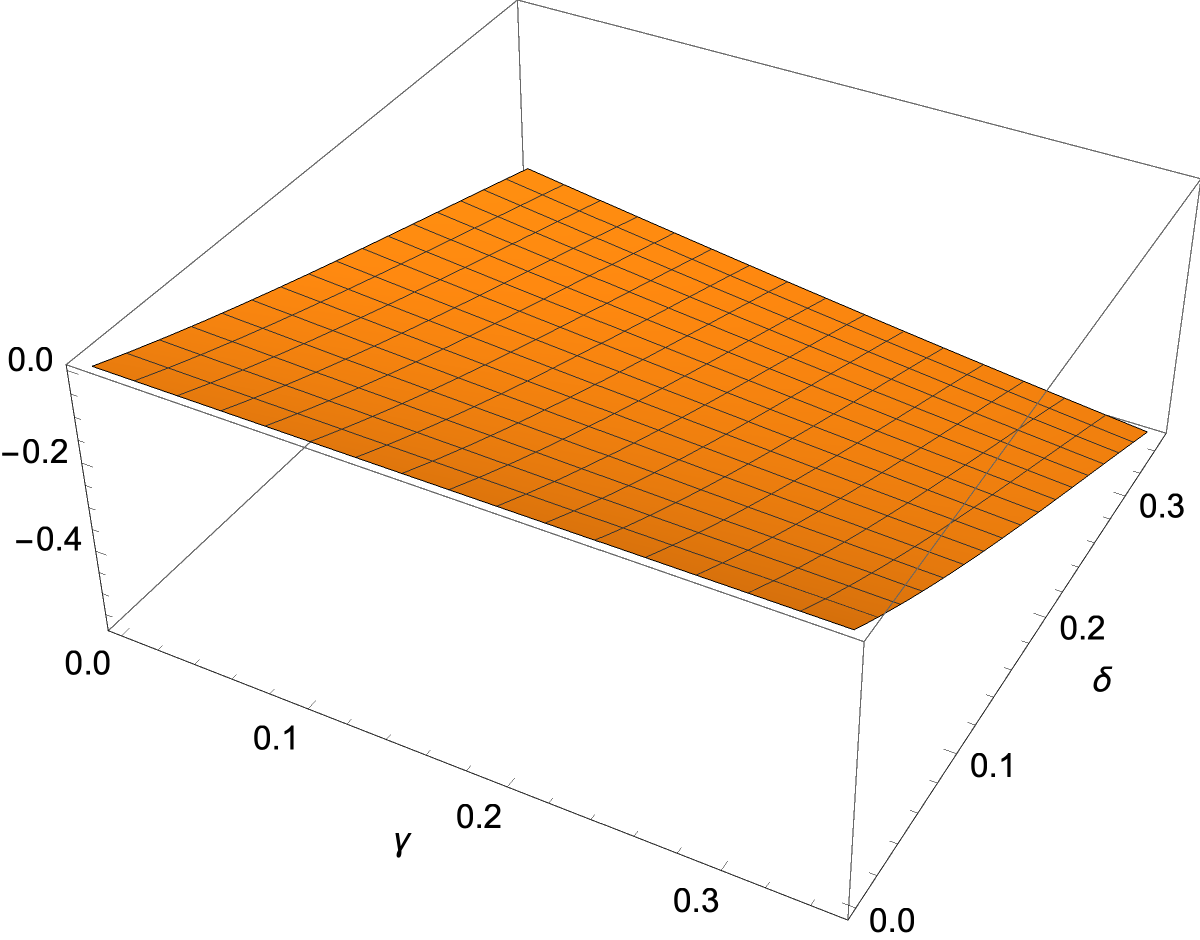} \\
(a) & (b) 
\end{tabular}
\caption{Derivatives of (a)  $|uw|$, and (b) $|ui|$ with respect to $\gamma$, for $\gamma, \delta \in [0, \pi/9]$.}
\label{fig:uv-plotd}
\end{figure}
We note that the derivatives of the functions $|uw|$ and $|ui|$ with respect to $\gamma$, 
depicted in~Figure~\ref{fig:uv-plotd}, are negative for $\gamma, \delta \in [0, \pi/9]$. 
This implies that $|uw|$ and $|ui|$ decrease as $\gamma$ increases, therefore their maximum value 
is achieved for $\gamma = 0$. 

We now set $\gamma = 0$ and find an upper bound on the quantities $|uw|$ and $|ui|$ (which, by 
the observation above, is an upper bound on $|uw|$ and $|ui|$ for any $\gamma \in [0, \pi/9]$). 
Let $v$ be the intersection point between the lower rays of the upper $2\delta$-cones of $Q_0(a)$ 
and $Q_2(b)$. Refer to~\ref{fig:DeltaCones}b. 
Observe that $\max\{|ui|, |uw|\} < |uv|$. To see this,
notice that $\angle{uiv} = 2\pi/3-2\delta \ge 4\pi/9$ and $\angle{iuv} = \pi/6$, therefore $\angle{uvi} \le 7\pi/18$. 
Thus $\angle{uiv}$ is the largest angle in $\triangle{uiv}$, which implies that $uv$ is the diameter of $\triangle{uiv}$, 
which in turn implies that $|uv| > \max\{|uw|, |ui|\}$. 
By the law of sines applied to $\triangle{uiv}$ (and the fact that $|au| = |ab| = 1$), we have 
$|uv| =  \sin(2\delta)/\sin(\pi/6+2\delta)$.


We have shown that the diameter of $R$ is bounded above by 
\[K_\delta = \frac{\sin(2\delta)}{\sin(\pi/6 + 2\delta)}.\] 

Thus, if both $c$ and $d$ are in $R$, 
then the claim of the lemma holds. 
Assume now that $d \notin R$, so $d$ lies above the upper ray bounding $Q_0(a)$. 
Let $o$ be the intersection point between the upper ray of $Q_0(a)$ and the bisector of $Q_2(b)$. 
Let $j$ be the intersection point between the lower ray bounding the upper $2\delta$-cone of $Q_2(b)$ and the circumference of $\canon{b}{u}$. If $\delta > \pi/12$ (and so $2\delta > \pi/6$), then $i$ lies below $o$, otherwise $i$ coincides with, or lies above $o$. We define the path $p(i)$ to be the line segment $io$ concatenated with the arc of the disk sector $\canon{b}{o}$, if $\delta > \pi/12$, or simply the arc of the disk sector $\canon{b}{i}$, if $\delta \le \pi/12$. Since $d$ is to the left of the line supporting $io$, we have that $|bo| < |bd| < |bc|$. (This latter inequality follows from the fact that $\overrightarrow{bd} \in \overrightarrow{Y_6}$.) This implies that $c$ also lies left of $p(i)$, so both $c$ and $d$ lie in the strip delimited by $\canon{b}{u}$, $p(i)$ and the two rays bounding the upper $2\delta$-cone of $Q_2(b)$. Thus $|cd|$ is no greater than the diameter of this strip, which we show to 
be no greater than $K_\delta$. 
For this, it suffices to show that $\max\{|ui|, |uj|, |ij|\} \le K_\delta$. 

As noted earlier, $|ui|$ decreases as $\gamma$ increases, therefore the maximum $|ui|$ value is achieved for 
$\gamma = 0$, and in this case we have shown that $|ui| < K_\delta$.
Similarly, it can be shown that $|uj|$ decreases as $\gamma$ increases, therefore the maximum $|uj|$ value is 
achieved for $\gamma = 0$. 
Next we set $\gamma = 0$ and show that $|uj| \le K_\delta$. 
From the isosceles triangle $\triangle buj$ we derive $\angle{ujv} = \pi/2-\delta$. 
Thus $\angle{juv} = \pi/2-\delta-\pi/6=\pi/3-\delta$ and $\angle{uvj} = \pi/6 + 2\delta \le \angle{ujv}$ for any 
$\delta \in [0, \pi/9]$. This along with the law of sines applied to $\triangle ujv$ implies that $|uj| \le |uv| = K_\delta$.


It remains to show that $|ij| \le K_\delta$. We will, in fact, show that $|ij| < |uj|$, which along with the conclusion above that 
 $|uj| \le K_\delta$, yields the desired result.  Angle $\angle{uij}$ is exterior to $\triangle uib$, therefore 
 $\pi/3 \le \angle{uij} \le \pi/3 + 2\delta$.  Earlier we showed that $\angle{ujv} = \pi/2-\delta \ge 7\pi/18$, 
 for any $\delta \le \pi/9$. It follows that $\angle{iuj} \le \pi-(7\pi/18+\pi/3) = 5\pi/18$ is the smallest 
 angle of $\triangle uij$, therefore $|ij| < |uj|$.  This completes the proof. 
\end{proof}

\subsection{Proof of Lemma~\ref{lem:special}}
\label{sec:lemproof-special}

%
\begin{figure}[ht]
    \centering
    \begin{tabular}{c@{\hspace{0.1\linewidth}}c}
        \includegraphics{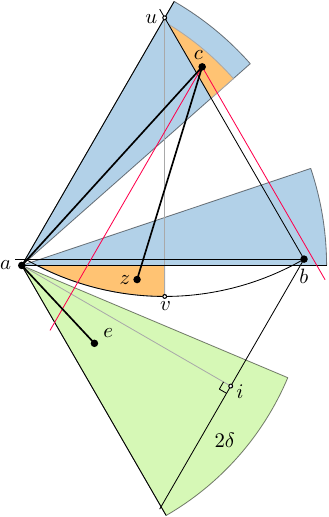} & 
        \includegraphics{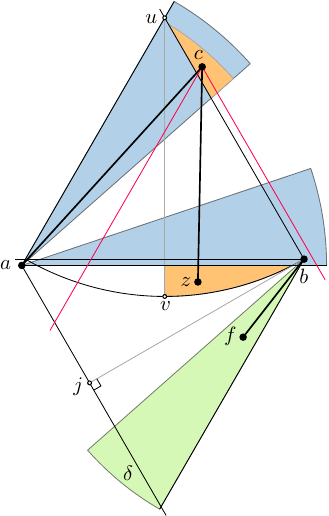} \\
        (a) & (b)  
    \end{tabular}
    \caption{The configuration for Lemma~\ref{lem:special}, with $c \notin Q_2(b)$ and $z \in Q_5(a)$.
    (a) $z$ lies left of $uv$: $e \in Q_3(b)$.
    (b) $z$ lies right of $uv$: $f \in Q_5(a)$.}
    \label{fig:yao6lemspecial}
\end{figure}
%

\textbf{Lemma~\ref{lem:special}.}\emph{
Let $a, b, c, z \in S$ be in the configuration depicted in Figure~\ref{fig:yao6lemspecial}, 
with $\overrightarrow{ac}, \overrightarrow{cz} \in \overrightarrow{Y_6}$. 
Let $\overrightarrow{ae}, \overrightarrow{bf} \in \overrightarrow{Y_6}$, with $e$ in the lower 
$2\delta$-cone of $Q_5(a)$ and $f$ in the lower $\delta$-cone of $Q_3(b)$. 
Then at least one of the following is true:  
\begin{enumerate}
\setlength{\itemsep}{0pt}
\item [(a)] $e \in Q_3(b)$
\item [(b)] $f \in Q_5(a)$ 
\end{enumerate}}
\vspace{-1em}
\noindent\begin{proof}
We define four intersection points $u$, $v$, $i$ and $j$ as follows: 
$u$ is at the intersection between the top rays of $Q_0(a)$ and $Q_2(b)$;
$v$ is at the intersection between the bisector of $\angle{aub}$ and 
the boundary of the disk sector $\canon{u}{b}$; 
$i$ is the foot of the perpendicular from $a$ on the lower ray of $Q_3(b)$; 
and $j$ is the foot of the perpendicular from $b$ on the lower ray of $Q_5(a)$.  
Refer to Figure~\ref{fig:yao6lemspecial}. 

Note that $|ae| \le |ai|$ implies condition (a), and $|bf| \le |bj|$ implies condition (b). 
We show that the first holds if $z$ lies to the left of or on $uv$, and the latter 
holds if $z$ lies to the right of or on $uv$ (and so at least one of the two conditions 
holds). We first show that $z \in \canon{u}{b}$. This follows immediately from the inequality 
$|uz| + |cb| < |cz| + |ub|$ (which can be derived using the triangle inequality 
twice on the triangles induced by the diagonals of $ucbz$), and the 
fact that $|cz| \le |cb|$ (because $\overrightarrow{cz} \in \overrightarrow{Y_6}$). It follows 
that $|uz| < |ub|$, therefore $z \in \canon{u}{b}$.

\paragraph{Condition (a)} Assume that $z$ lies to the left of $uv$ (as in Figure~\ref{fig:yao6lemspecial}a). 
Because $z \in \canon{u}{b}$ is below the horizontal through $a$, $\angle{azv}$ is obtuse and therefore $|az| \le |av|$ 
(equality holds when $z$ coincides with $v$). 
Also $|ae| \le |az|$, because $z$ and $e$ are in the same sector $Q_5(a)$ and 
$\overrightarrow{ae} \in \overrightarrow{Y_6}$.  It follows that $|ae| \le |av|$. We now show that 
$|av| \le |ai|$, which implies $|ae| \le |ai|$, thus settling this case. 

Let $\gamma \in [0, \delta]$ be the angle formed by $ab$ with the horizontal through $a$.
Then $\angle abi = \pi/3 - \gamma$ and $|ai| = |ab|\sin(\pi/3-\gamma)$.  
The law of sines applied to $\triangle uav$ tells us that
\[
    \frac{|av|}{\sin \pi/6} = \frac{|ua|}{\sin{\angle{uva}}} = \frac{|uv|}{\sin{\angle{uav}}}.
\]
Note that $|uv| = |ub| \le |ua|$, because $v$ lies on the circumference of $\canon{u}{b}$ and $a$ lies outside of this disk.
This along with the latter equality above yields $\angle{uav} \le \angle{uva}$.  
The sum of these two angles is $5\pi/6$ (recall that $uv$ is the 
bisector of $\angle{aub}$), therefore $\angle{uva} \ge 5\pi/12$. Also note that 
$\angle{uva} < \pi/2$, because $v$ lies strictly below the horizontal through $a$ 
(otherwise $d$ may not exist). It follows that $\sin{\angle{uva}} \ge \sin 5\pi/12$. Substituting this 
in the equality above yields $|av| \le |ua| \sin \pi/6 / \sin 5\pi/12$.
The law of sines applied to triangle $\triangle abu$ yields 
$|au| = |ab|\sin(\pi/3+\gamma)/\sin \pi/3$, which substituted in the previous equality yields 
\[
   |av| \le |ab| \frac{\sin(\pi/3+\gamma)\sin \pi/6}{\sin \pi/3 \sin 5\pi/12}.
\]
Thus the inequality $|av| \le |ai|$ holds for any $\gamma$ satisfying 
\[
   \frac{\sin(\pi/3+\gamma)\sin \pi/6}{\sin \pi/3 \sin 5\pi/12} \le \sin(\pi/3-\gamma).
\]
It can be easily verified that this inequality holds for any $\gamma \le \delta \le 23\pi/180$, and 
in particular for the $\delta$ values restricted by Lemma~\ref{lem:main}.

\paragraph{Condition (b)} Assume now that $z$ lies to the right of $uv$ (as in Figure~\ref{fig:yao6lemspecial}b). 
In this case $|bf| \le |bz| \le |bv|$. We now show that $|bv| \le |bj|$, which implies 
$|bf| \le |bj|$, thus settling this case.
From the right triangle $\triangle baj$ with angle $\angle baj = \pi/3+\gamma$, we derive 
$|bj| = |ab|\sin(\pi/3+\gamma)$. Next we derive an upper bound on $|bv|$. 
From the isosceles triangle $\triangle vub$, having angle $\angle vub = \pi/6$, we derive 
$|bv| = 2|bu| \sin \pi/12$. 
The law of sines applied to triangle $\triangle uab$ gives us  
$|ub| = |ab|\sin(\pi/3-\gamma)/\sin \pi/3$, 
which substituted in the previous equality yields  
$|bv| = 2|ab|\sin(\pi/3-\gamma)\sin \pi/12/\sin \pi/3$. 
Thus the inequality $|bv| \le |bj|$ holds for any $\gamma$ value satisfying
\[
\frac{2\sin(\pi/3-\gamma)\sin \pi/12}{\sin \pi/3} \le  \sin(\pi/3+\gamma).
\] 
It can be verified that this inequality holds for any $\gamma \le \delta \le \pi/3$, 
and in particular for the $\delta$ values restricted by Lemma~\ref{lem:main}.
\end{proof}

\section{Conclusion}
\label{sec:Conclusion}

\begin{figure}[htbp]
 \centering
 \includegraphics{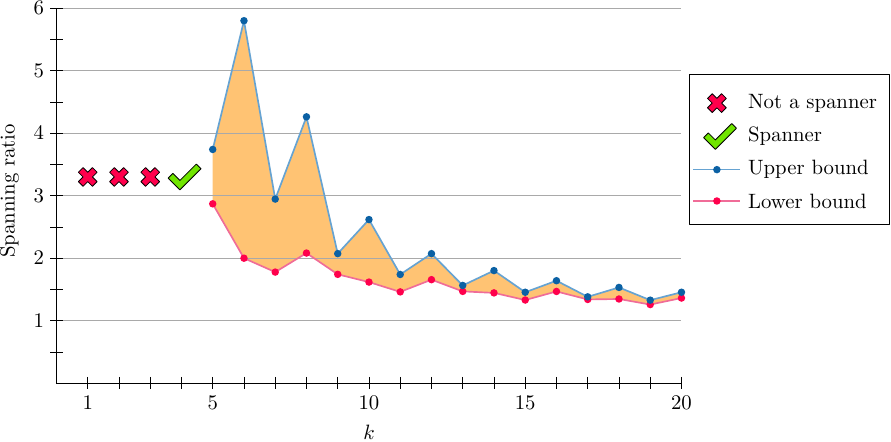}
 \caption{A plot showing the best known bounds on the spanning ratio of Yao graphs.}
 \label{fig:Graph}
\end{figure}

The main contributions of this paper are threefold. First we prove an upper bound of $3.74$ on the spanning ratio of $Y_5$. This answers the question of which Yao graphs are spanners, by establishing that $Y_k$ is a spanner if and only if $k \geq 4$. In addition, the new upper bound falls just below the lower bound of $3.79$ established for $\Theta_5$~\cite{bose2013theta5}, marking the first separation between the spanning ratio of a Yao graph and its peer $\Theta$-graph. For all other $k$ it is still unknown which of $Y_k$ or $\Theta_k$ has a better spanning ratio. By exploiting the asymmetry of Yao graphs with an odd number of cones, we also improve the upper bound on the spanning ratio of all graphs $Y_k$, for odd $k \ge 7$. The improvement is particularly significant for low values of $k$ -- for instance, we reduce the the upper bound on the spanning ratio of $Y_7$ from 7.562 to $2.946$. Figure~\ref{fig:Graph} and Table~\ref{tab:results} summarize the currently known bounds on the spanning ratio of Yao graphs, including the results of this paper.

Second, we reduce the upper bound on the spanning ratio of $Y_6$ from $17.64$~\cite{damian2012yao} to $5.8$. We attribute this significant improvement to our direct approach of proving that $Y_6$ is a spanner, instead of proving that $Y_6$ spans the edges of another spanner. We also present the first lower bounds on the spanning ratio of $Y_k$, for $k \ge 6$. The gaps left between the lower and upper bounds are fairly small, especially for odd $k$. This is mostly due to our improved upper bound for odd Yao graphs, leading us to believe that the upper bounds on the spanning ratios of $Y_k$, for even $k$, could be further reduced. In fact, we conjecture that the spanning ratio of $Y_6$ matches the spanning ratio of 2 established for $\Theta_6$.

Finally, we bring some light into the obscure world of Yao-Yao graphs, by proving that $YY_5$ is not a spanner. In this area, many interesting questions remain unanswered. For instance, even though Bauer and Damian~\cite{bauer2013infinite} showed that there exist an infinite number of Yao-Yao graphs that are spanners, the first member of this class is $YY_{36}$, and nothing is known about the spanning properties of $YY_7$ through $YY_{35}$. In particular, it would be interesting to know whether, as with Yao and $\Theta$-graphs, there is a constant $k$ such that all Yao-Yao graphs with more than $k$ cones are spanners, or whether there exist an infinite number of Yao-Yao graphs that are not spanners.

\newcolumntype{M}[1]{>{\centering\arraybackslash$\displaystyle}m{#1}<{$}}
\newcolumntype{N}{@{}m{0pt}@{}}

\begin{table}[htbp]
\caption {Lower and upper bounds on the spanning ratio of Yao graphs.} \label{tab:results}
\begin{center}\renewcommand{\arraystretch}{1.6}
\begin{tabular}{|M{4.25em}|M{24em}|M{7.75em}|N}
  \hline
  k & \text{Lower bound} & \text{Upper bound} \\\hline\hline
  k=2,3 & \infty~\cite{el2009yao} & \text{Not a spanner} \\\hline
  k=4 & \text{Open} & 16(23+13\sqrt{2}) \approx 663~\cite{bose2012pi} \\\hline
  k=5 & 2.87 & 2 + \sqrt{3} \approx 3.74 \\\hline
  k=6 & 2 & 5.8 \\\hline
  k=4x+2 & 1 + 2 \sin\left(\frac{\theta}{2}\right) & \frac{1}{1-2\sin\left(\frac{\theta}{2}\right)}~\cite{bose2012piArxiv} & \\[1em]\hline
  k=4x+3 & 1 + 2 \sin \left( \frac{3\theta}{8} \right) + 4 \frac{\left( \sin \left( \frac{13\theta}{16} \right) + \sin \left( \frac{19\theta}{16} \right)\right) \sin \left( \frac{\theta}{16} \right) \sin \left( \frac{3\theta}{8} \right)}{\sin(2\theta)} & \frac{1}{1-2\sin\left(\frac{3\theta}{8}\right)} & \\[1em]\hline
  k=4x+4 & 1 + 2 \sin \left( \frac{\theta}{2} \right) \left( 1 + \tan \left( \frac{\theta}{2} \right) \right) & \frac{1}{1-2\sin\left(\frac{\theta}{2}\right)}~\cite{bose2012piArxiv} & \\[1em]\hline
  k=4x+5 & 1 + 2 \sin \left( \frac{3\theta}{8} \right) + 4 \sin \left( \frac{5\theta}{16} \right) \sin \left( \frac{3\theta}{8} \right) & \frac{1}{1-2\sin\left(\frac{3\theta}{8}\right)} & \\[1em]\hline
\end{tabular}
\end{center}
\end{table}

\bibliographystyle{plain}
\bibliography{contyao}

\begin{thebibliography}{10}

\bibitem{althofer1993sparse}
Ingo Alth{\"o}fer, Gautam Das, David Dobkin, Deborah Joseph, and Jos{\'e}
  Soares.
\newblock On sparse spanners of weighted graphs.
\newblock {\em Discrete \& Computational Geometry}, 9(1):81--100, 1993.

\bibitem{yao56-jocg-15}
Luis Barba, Prosenjit Bose, Mirela Damian, Rolf Fagerberg, Wah~Loon Keng,
  Joseph O'Rourke, Andr\'e van Renssen, Perouz Taslakian, Sander Verdonschot,
  and Ge~Xia.
\newblock New and improved spanning ratios for {Y}ao graphs.
\newblock {\em Journal of Computational Geometry}, 6(2), 2015.

\bibitem{bauer2013infinite}
Matthew Bauer and Mirela Damian.
\newblock An infinite class of {S}parse-{Y}ao spanners.
\newblock In {\em Proceedings of the 24th Annual ACM-SIAM Symposium on Discrete
  Algorithms (SODA 2013)}, pages 184--196, 2013.

\bibitem{bonichon2010connections}
Nicolas Bonichon, Cyril Gavoille, Nicolas Hanusse, and David Ilcinkas.
\newblock Connections between theta-graphs, {D}elaunay triangulations, and
  orthogonal surfaces.
\newblock In {\em Proceedings of the 36th International Workshop on
  Graph-Theoretic Concepts in Computer Science (WG 2010)}, volume 6410 of {\em
  Lecture Notes in Computer Science}, pages 266--278, 2010.

\bibitem{bose2012piArxiv}
Prosenjit Bose, Mirela Damian, Karim Dou{\"{\i}}eb, Joseph O'Rourke, Ben
  Seamone, Michiel Smid, and Stefanie Wuhrer.
\newblock {$\pi/2$}-angle {Y}ao graphs are spanners.
\newblock {\em ArXiv e-prints}, 2010.
\newblock arXiv:1001.2913 [cs.CG].

\bibitem{bose2012pi}
Prosenjit Bose, Mirela Damian, Karim Dou{\"{\i}}eb, Joseph O'Rourke, Ben
  Seamone, Michiel Smid, and Stefanie Wuhrer.
\newblock {$\pi/2$}-angle {Y}ao graphs are spanners.
\newblock {\em International Journal of Computational Geometry \&
  Applications}, 22(1):61--82, 2012.

\bibitem{bose2004approximating}
Prosenjit Bose, Anil Maheshwari, Giri Narasimhan, Michiel Smid, and Norbert
  Zeh.
\newblock Approximating geometric bottleneck shortest paths.
\newblock {\em Computational Geometry: Theory and Applications},
  29(3):233--249, 2004.

\bibitem{bose2013theta5}
Prosenjit Bose, Pat Morin, Andr\'e van Renssen, and Sander Verdonschot.
\newblock The {$\theta_5$}-graph is a spanner.
\newblock In {\em Proceedings of the 39th International Workshop on
  Graph-Theoretic Concepts in Computer Science (WG 2013)}, pages 100--114,
  2013.

\bibitem{bose2013spanning}
Prosenjit Bose, Andr\'e van Renssen, and Sander Verdonschot.
\newblock On the spanning ratio of theta-graphs.
\newblock In {\em Proceedings of the 13th Algorithms and Data Structures
  Symposium (WADS 2013)}, pages 182--194, 2013.

\bibitem{clarkson1987approximation}
Kenneth~L. Clarkson.
\newblock Approximation algorithms for shortest path motion planning.
\newblock In {\em Proceedings of the 19th {ACM} Symposium on the Theory of
  Computing ({STOC} 1987)}, pages 56--65, 1987.

\bibitem{damian2009spanner}
Mirela Damian, Nawar Molla, and Val Pinciu.
\newblock Spanner properties of $\pi/2$-angle {Y}ao graphs.
\newblock In {\em Proceedings of the 25th European Workshop on Computational
  Geometry (EuroCG 2009)}, pages 21--24, 2009.

\bibitem{damian2012yao}
Mirela Damian and Kristin Raudonis.
\newblock Yao graphs span theta graphs.
\newblock {\em Discrete Mathematics, Algorithms and Applications},
  4(02):1250024, 2012.

\bibitem{el2009yao}
Nawar~M. El~Molla.
\newblock Yao spanners for wireless ad hoc networks.
\newblock Master's thesis, Villanova University, 2009.

\bibitem{flinchbaugh1981strong}
B.~E. Flinchbaugh and L.~K. Jones.
\newblock Strong connectivity in directional nearest-neighbor graphs.
\newblock {\em SIAM Journal on Algebraic and Discrete Methods}, 2(4):461--463,
  1981.

\bibitem{li2002sparse}
Mo~Li, Peng-Jun Wan, Yu~Wang, and Ophir Frieder.
\newblock Sparse power efficient topology for wireless networks.
\newblock In {\em Proceedings of the 35th Annual Hawaii International
  Conference on System Sciences (HICSS 2002)}, pages 3839--3848, 2002.

\bibitem{NS06}
G.~Narasimhan and M.~Smid.
\newblock {\em Geometric Spanner Networks}.
\newblock Cambridge University Press, 2007.

\bibitem{yao1982constructing}
Andrew Chi~Chih Yao.
\newblock On constructing minimum spanning trees in {$k$}-dimensional spaces
  and related problems.
\newblock {\em SIAM Journal on Computing}, 11(4):721--736, 1982.

\end{thebibliography}

\newpage
\appendix
\section{Lower bound coordinates}
\label{app:LB-coordinates}

The following table lists the coordinates of the points in the $Y_5$ graph shown in Figure~\ref{fig:frac-4b}, whose spanning ratio is more than 2.87.

\begin{table}[h]
\centering
\begin{tabular}{@{(}r@{, }r@{)\hspace{1em} }l@{\hspace{8em}(}r@{, }r@{)}}
0 & 0 & $a$ & 341 & 264 \\
252 & 82 & $b$ & -179 & 97 \\
130 & 230 & $c'$ & -180 & 112 \\
12 & 193 & $d'$ & -91 & -75 \\
30 & 302 & & 316 & 36 \\
293 & 269 & & 352 & 229 \\
321 & 229 & & 303 & 297 \\
-143 & 130 & & -167 & 63 \\
-143 & 80 & & -167 & 147 \\
193 & 384 & & -26 & -75 \\
158 & 367 & & 371 & 213 \\
-135 & 272 & & 51 & 310 \\
-91 & 287 & & -176 & 37 \\
-153 & -55 & & 344 & 274 \\
371 & 75 & & -189 & 105 \\
410 & 115 & & 99 & 320 \\
334 & 276 & & -15 & 284
\end{tabular}
\caption{Coordinates of the points in Figure~\ref{fig:frac-4b}}
\end{table}


\medskip
\noindent
{\bf Acknowledgement.} We thank Davood Bakhshesh for pointing out a flaw in the arguments of Lemma 13 in~\cite{yao56-jocg-15}, which we have corrected in this document. 

\end{document}